\documentclass[%
    aps,
    pra,
    reprint,
    amsmath,
    amssymb,
    floatfix
]{revtex4-2}

\usepackage{physics}               
\usepackage{mathtools}              
\usepackage{amsthm}                
\usepackage{graphicx}
\usepackage[caption=false]{subfig}  
\usepackage{bbm}                   

\theoremstyle{plain}
\newtheorem{theorem}{Theorem}
\newtheorem{Lemma}{Lemma}

\newcommand*{\targetPauli}[1]{\sigma_{#1}^{(T)}}
\newcommand*{\Id}[1][]{\mathbbm{1}_{#1}}

\begin{document}

\title{Defining Unique, Redundant, and Synergistic Quantum Information}

\author{Sean Ericson}
\email{sericson@uoregon.edu}
\affiliation{Department of Physics, University of Oregon, Eugene, OR 97403, USA}

\author{Hailin Wang}
\email{hailin@uoregon.edu}
\affiliation{Department of Physics, University of Oregon, Eugene, OR 97403, USA}

\author{S. J. van Enk}
\email{svanenk@uoregon.edu}
\affiliation{Department of Physics, University of Oregon, Eugene, OR 97403, USA}

\date{\today}

\begin{abstract}
We extend the classical ideas of the Partial Information Decomposition (PID) to the quantum domain and quantify unique, redundant, and synergistic quantum information. 
We show that unique information plays the central role in quantum error correction codes: any erasure-correctable subset of encoding qubits must contain zero unique information. Synergistic information between two disjoint subsets of encoding qubits appears when a logical operation is supported on the whole set but not on the subsets separately. 
In a different application of our PID, we show that redundant quantum information is the crucial feature of the decoherence mechanism proposed by Zurek \textit{et al.} to explain how the classical world emerges out of the quantum world.
\end{abstract}

\maketitle

\section{\label{sec:intro}Introduction}
The concept of redundant information is used in at least two areas of quantum mechanics without actually being defined precisely. 
First, it is used to describe colloquially how quantum error correction works. For example, one may explain that one quantum bit of information is encoded \textit{redundantly} in multiple physical qubits so as to be recoverable if just a single physical qubit is corrupted. 
Second, in order to explain how the objective classical world emerges out of a quantum-mechanical microscopic world, quantum Darwinism~\cite{Touil_2022} argues that it is the fact that many different parts of the environment share the same information about a quantum system that makes the latter appear classical. 
In the decoherence model used in~\cite{Touil_2022} all constituents of the environment start off in the same state and subsequently interact with the quantum system in the same way. 
Hence, whatever information is present is indeed redundant and one does not need a precise definition of what redundant information actually is or what other types of non-redundant information there may be.

In classical information theory, starting with the seminal work~\cite{williams2010} from 2010, a framework has been developed that allows one to define not just redundant information, but also the complementary concept, unique information, and synergistic information as well. 
The framework is known as Partial Information Decomposition (PID) and has been used extensively to characterize information flow in complex networks. 
For reviews, see, e.g.,\,~\cite{lizier2018,newman2022}.
The meaning of these concepts can be illustrated with a very simple example. Suppose we wish to learn a 3-bit string $T=(t_1,t_2,t_3)$ from two sets of bits $A=\{a_1,a_2\}$ and $B=\{b_1,b_2\}$ that are correlated in a known way with $T$, as follows
\begin{eqnarray}
    t_1 &=& a_1=b_1, \nonumber\\
    t_2 &=& b_2, \nonumber\\
    t_3 &=& a_2 \oplus b_2.
\end{eqnarray}
Then we can call the first bit of information, $t_1$, purely redundant, as we can obtain the information from $A$ alone but also from $B$ alone.
The second bit, $t_2$, we obtain uniquely from $B$. 
The third bit, $t_3$, cannot be obtained from either $A$ alone or from $B$ alone; we need both $A$ and $B$. That third bit is then called synergistic information.

It is important to note that in less straightforward examples multiple sensible definitions of unique, redundant, and synergistic information are possible, depending on what the goal is and what property one is interested in~\cite{bertschinger2014,finn2018,griffith2014,gutknecht2021,gutknecht2023,harder2013,ince2017,james2018,james2018b,kolchinsky2022,mediano2022,rauh2017secret,wibral2017a,wibral2017b,rosas2019,rosas2020,battiston2021,ay2019}. 
When trying to define a quantum version of the PID we should expect multiple sensible definitions to be possible as well.

This article is organized as follows.
We briefly present the main ideas behind the classical PID in Section~\ref{sec:PID}. 
Our main goal is to define a quantum version of the PID. 
We make a particular choice, based on superdense coding. 
We provide some reasons for this choice in Section~\ref{sec:capacity}, but the main justification follows from testing our definition in various applications.
In Section~\ref{sec:CPID} we define quantum unique information (the other quantities then follow automatically) in several steps and show that we get sensible results in simple test cases.
In fact, we argue that quantum unique information is the crucial concept (rather than redundant information) for quantum error correction: no physical qubit should contain any unique information, because losing it would prevent one from recovering the full original encoded information.
Sections~\ref{sec:QEC} and~\ref{sec:Decoherence} concern the detailed applications of our final definitions to the two cases mentioned above---quantum error correcting codes and an extension of the decoherence model
of~\cite{Touil_2022}---as well as the lessons we can learn about these two topics from our quantum PID. 
For example, the PID quantifies in what sense random codes approximate exact codes.
Those lessons are summarized in Section~\ref{sec:Summary}. 
The Appendix contains some more technical results pertaining to the application of our PID to particular classes of states (e.g.,\ pure states and quasi-classical states), as well as a proof of the non-negativity of the PID elements.
Additionally, we formulate and prove a particular theorem about the information capacity of subsets of stabilizer states.

\section{\label{sec:PID} Partial Information Decomposition}
Recent attention in the field of classical information theory has been given to the goal of extending the traditional Shannon quantities of entropy and mutual information to describe probability distributions over many variables.
While mutual information between subsets of variables is useful for classifying various distributions and elucidating relationships between the constituent variables, there are nonetheless many examples of different distributions of three or more variables that cannot be distinguished by mutual information alone~\cite{james2017}.
The Partial Information Decomposition was originally conceived in part to address this problem. 
Another motivation arises from the intuition---vague as it might be---that high-level information emerging in neural networks ought to be synergistic, in the sense that it is more than the sum of the individual neurons' information content.

\subsection{\label{sec:PID_classical}Classical PID}
The prototypical (and minimal) example of a PID considers a joint distribution of three variables $T$, $A$, and $B$. Here $T$ is the ``target'', which is the main variable of interest about which information is to be gained through measurement of the $A$ and $B$ variables. 
The idea is to decompose the mutual information $I(T;AB)$ between $T$ and $AB$ (considered jointly) into information about $T$ that is unique to $A$ (and similarly for $B$), information about $T$ that is redundant (in the sense that it could be learned by measuring either $A$ or $B$), and synergistic information that can only be learned by measuring $A$ and $B$ jointly:
\begin{eqnarray}
    I(T;AB) \; &=& I_{\text{unq}A} + I_{\text{unq}B} + I_{\text{red}} + I_{\text{syn}}, \nonumber \\
    I(T;A) \; &=& I_{\text{unq}A} + I_{\text{red}}, \nonumber \\
    I(T;B) \; &=& I_{\text{unq}B} + I_{\text{red}}.
\end{eqnarray}
A PID is determined by defining one of $I_\text{unq}$, $I_{\text{red}}$, or $I_{\text{syn}}$, as the remaining quantities are then fixed by the above three relations.

\subsection{\label{sec:PID_quantum}Quantum PID}
Just as classical information-theoretic quantities such as entropy, relative entropy, mutual information, etc. have been found to have corresponding versions in quantum information theory, it is natural to seek to define quantum unique, redundant, and synergistic information.
Indeed, well-known aspects of quantum information such as the no-cloning theorem and the monogamy of entanglement would seem to imply that certain multipartite quantum states necessarily possess some amount of unique information.

A definition for a partial information decomposition for quantum mutual information is given in~\cite{vanenk2023qpid} as a quantum generalization of a classical PID in which unique information is defined based on logarithmic pooling of the conditional distributions $P(T|A)$ and $P(T|B)$ as described in~\cite{van_Enk_2023}.
While this PID succeeds in differentiating states that are indistinguishable via the mutual information alone, its definition is purely formal.

As an alternative, we seek to construct an operational PID for the classical information capacity of the quantum system.
The classical information capacity of the target component of a multipartite state relative to other components is closely related to the mutual information.
Additionally, a particular aspect of the classical information capacity of quantum systems discussed below motivates the application of the partial information decomposition to this quantity.

\section{\label{sec:capacity}Classical Information Capacity}
Classical information capacity typically describes the ability of a quantum channel to transmit classical information~\cite{Hausladen_1996,schumacher1997}.
Quantum channels for the transmission of classical information consist of the composition of an encoding, transmission, and decoding channel.
For simplicity, we will consider only a fixed set of encoding channels and the noiseless (identity) transmission channel.
Details of the decoding channel will not be relevant for our considerations.
We will refer to the classical information capacity of the system itself, with the understanding that what is meant is the capacity of the considered channels as applied to the system.
Additionally, we will use the terms ``capacity'' and ``information'' interchangeably to refer to the classical information capacity.

\subsection{\label{sec:capacity_twoBit} Superdense Coding}
Superdense coding refers to the ability to use preexisting entanglement to send up to twice the amount of information as would classically be allowed by the dimensionality of the system (e.g.,\ a qubit in a maximally entangled state can be used to send two bits of classical information).
An important constraint on multi-party communication with quantum states is the so-called ``exclusion principle''~\cite{Prabhu_2013} for superdense coding.
For any multipartite quantum state, at most one pair of components can possess information capacity above the classical limit (i.e.,\ a \textit{quantum advantage}).
This fact motivates a partial information decomposition for superdense capacity, as a pair of components with a quantum advantage clearly possesses information capacity unique to that pair.

We consider the case of a tripartite state $\rho_{TAB}$, with the constituent components held by parties T, A, and B (referred to as the ``Target'', ``Alice'', and ``Bob'' for consistency with traditional quantum information nomenclature).
We focus on states for which the target component is a qubit (i.e.,\ the dimensionality of the target Hilbert space $\mathcal{H}_T$ is $d_T = 2$).
Alice and Bob's components, however, may be higher-dimensional and/or composite systems (i.e.,\ $d_A,\;d_B\geq2$).
Information is to be encoded via the application of one unitary from a predetermined set to the target qubit, before being sent through a noiseless quantum channel to either Alice or Bob, or perhaps both.
When the particular choice of unitary for any one encoding is unknown, the encoding process can be described by the application of a quantum channel to the target qubit (referred to as an \textit{encoding channel}) before it is sent via the noiseless quantum channel.

For a set of unitaries $\mathcal{U} = \{U_i\}$ with a priori selection probabilities $\{p_i\}$, we define the corresponding encoding channel by
\begin{equation}
    \Phi_{\text{encoding}}(\rho_{TAB}) = \sum_i p_i\left(U_i \otimes \mathbbm{1}_{AB}\right)\rho_{TAB}\left(U_i \otimes \mathbbm{1}_{AB}\right)^\dag.
\end{equation}
The joint information capacity $C_{AB}$ of such an encoding channel is given by the well-known \textit{Holevo bound},
\begin{equation} \label{eq:superdense_holevo_def}
    C_{AB} = \chi(\rho_{TAB};\Phi) \coloneqq H(\Phi(\rho_{TAB})) - H(\rho_{TAB}),
\end{equation}
where $H(\cdot)$ is the von Neumann entropy, and Alice and Bob's components are considered jointly.

For an arbitrary state, the maximum value of $C_{AB}$ is given by~\cite{Bowen_2001}
\begin{eqnarray}
    C_{AB} \; &=& \log d_T + H(\rho_{AB}) - H(\rho_{TAB}) \\
    &=& I(T;AB) + \log d_T - H(\rho_T).
\end{eqnarray}
The information from the target that Alice and Bob can individually receive (denoted $C_A$ and $C_B$, respectively) is calculated identically as in~\eqref{eq:superdense_holevo_def} with the irrelevant subsystem first traced out.

When the target subsystem is a qubit, this optimum is achieved by the application of the so-called \textit{completely depolarizing channel} to the target.
The completely depolarizing channel is defined by its action of mapping all input states to the maximally mixed state $\frac{1}{d}\mathbbm{1}$, where $d$ is the dimension of the Hilbert space of the state.
For qubits, this corresponds to the set of unitaries consisting of the identity and the Pauli matrices,
\begin{eqnarray} \label{eq:pauli_set}
    \mathcal{U} = \{\mathbbm{1}_T, \targetPauli{x}, \targetPauli{y}, \targetPauli{z}\}, \nonumber \\
    p_i = \frac{1}{4} \quad (i=1,2,3,4),
\end{eqnarray}
where we denote $\sigma_x \otimes \mathbbm{1}_{AB}$ by $\targetPauli{x}$, etc.

Note that, because information quantities must be invariant under local unitary operations, we can assume that the reduced density operator of the target qubit is diagonal in the standard basis.
This is equivalent (up to ordering and a phase) to defining the Pauli operators with respect to the eigenvectors of the reduced state of the target.

\subsection{\label{sec:capacity_oneBit} ``Normally''-dense Coding}
As the exclusion principle precludes the simultaneous transmission of more than one bit to Alice and Bob individually, we consider also a ``one-bit'' or ``normally-dense'' encoding process by which the target can send at most one bit of information.

A one-bit encoding can be achieved by any two-element subset of the unitary set given in~\eqref{eq:pauli_set}, and on average is optimized by using either unitary with equal probability.
Without loss of generality, however, we can assume that one of the two unitaries is the identity.
We therefore consider the three one-bit encoding channels
\begin{eqnarray}
    \Phi_X(\rho) \; &=& \frac{1}{2}\left(\rho + \targetPauli{x}\rho\targetPauli{x}\right), \nonumber \\
    \Phi_Y(\rho) \; &=& \frac{1}{2}\left(\rho + \targetPauli{y}\rho\targetPauli{y}\right), \nonumber \\
    \Phi_Z(\rho) \; &=& \frac{1}{2}\left(\rho + \targetPauli{z}\rho\targetPauli{z}\right),
\end{eqnarray}
where $\rho$ is either $\rho_{TA}$ or $\rho_{TB}$.

Note that these one-bit encoding channels are equivalent to the application of the \textit{completely dephasing channel} to the target qubit with respect to the eigenbasis of the respective Pauli matrix.
The completely dephasing channel with respect to a basis $\mathcal{B}$ has the effect of diagonalizing its input in that basis.

We will use the notation $X_A$ to refer to the capacity of the channel $\Phi_X$ as applied to $\rho_{TA}$, $X_B$ as applied to $\rho_{TB}$, and similarly for the channels $\Phi_Y$ and $\Phi_Z$.

For the purpose of one-bit encoding, defining the Pauli operators with respect to eigenvectors of $\rho_T$ leaves an ambiguity in the relative phase in superpositions of the form
\begin{equation}
    a\ket{0} + be^{i\phi}\ket{1}.
\end{equation}
To define a PID for superdense capacities, the use of quantities that are invariant with respect to this phase ambiguity is required.
To that end, we define the ``parallel'' and ``transverse'' one-bit capacities,
\begin{eqnarray}
    \mathcal{C}_{A(B)}^\parallel &=& Z_{A(B)}, \\
    \mathcal{C}_{A(B)}^\perp &=& \langle X_{A(B)} \rangle_\phi = \langle Y_{A(B)} \rangle_\phi.
\end{eqnarray}
That is, the parallel one-bit capacity is simply the $Z$ capacity, and the transverse capacity is the phase-averaged $X$ (or $Y$) capacity.

\section{\label{sec:CPID} Partial Information Decomposition for $C_{AB}$}

The collective two-bit capacity, being an information measure, should in theory admit a decomposition into capacity that is unique to Alice/Bob, capacity that is redundantly shared, and synergistic capacity that can only be realized by considering their components jointly.
Analogously to the decomposition of mutual information, we define the C-PID by
\begin{eqnarray}
    \label{eq:CPID_def}
    C_{AB} \; &=& C_{\text{unq}A} + C_{\text{unq}B} + C_{\text{red}} + C_{\text{syn}}, \nonumber \\
    C_A \; &=& C_{\text{unq}A} + C_{\text{red}}, \nonumber \\
    C_B \; &=& C_{\text{unq}B} + C_{\text{red}}.
\end{eqnarray}

We will first construct our proposed definition of unique information in Section~\ref{sec:CPID_def}. (The other two types of information are then uniquely fixed by~\eqref{eq:CPID_def}.) We then test and verify our definition on a few simple examples with just enough symmetry such that intuition unambiguously indicates when unique information ought to be nonzero.

\subsection{\label{sec:CPID_def} Definition}
We begin by defining one-bit unique capacities
\begin{eqnarray}
    \Gamma_A \; &=& \max\left(\mathcal{C}_A^\parallel - \mathcal{C}_B^\parallel, 0\right) + \max\left(\mathcal{C}_A^\perp - \mathcal{C}_B^\perp, 0\right), \nonumber \\
    \Gamma_B \; &=& \max\left(\mathcal{C}_B^\parallel - \mathcal{C}_A^\parallel, 0\right) + \max\left(\mathcal{C}_B^\perp - \mathcal{C}_A^\perp, 0\right).
\end{eqnarray}
These definitions for one-bit unique capacity follow from the intuition that either party may possess an advantage in either one-bit $Z$ encoding, the orthogonal encodings ($X$,$Y$), or both.

For pure states $\rho_{TAB} = \dyad{\psi_{TAB}}$ the one-bit unique capacities are in fact suitable definitions for unique superdense capacity, i.e.,\ $C_{\text{unq}A(B)} = \Gamma_{A(B)}$ (see Appendix~\ref{sec:PureStates}).
This is also the case for states that are completely decohered in a given basis $\ket{t,a,b}$,
\begin{eqnarray}
    \rho_{TAB} = \sum_{t,a,b}p_{t,a,b}\dyad{t,a,b},
\end{eqnarray}
in which case the parallel capacities are identically zero ($\mathcal{C}_A^\parallel = \mathcal{C}_B^\parallel = 0$), the transverse capacities and superdense capacities coincide ($\mathcal{C}_A^\perp = C_A$, $\mathcal{C}_B^\perp = C_B$), and the unique superdense capacity reduces to
\begin{eqnarray}
    C_{\text{unq}A} &=& \max(C_A - C_B, 0) \\
    C_{\text{unq}B} &=& \max(C_B - C_A, 0).
\end{eqnarray}

This is in fact equivalent (up to a shift in redundant information) to the well-known Minimal Mutual Information (MMI) PID (see Appendix~\ref{sec:ClassicalStates}).
An often-cited criticism of the MMI PID is its inability to differentiate ``amount'' of information from ``kind'' of information~\cite{harder2013, bertschinger2014}; whichever party has less information always has zero unique information, hence both parties cannot simultaneously have unique information.
Our quantum definition overcomes this shortcoming by defining unique capacity in terms of two distinct types of information (namely, the parallel and transverse capacities), thus allowing for the possibility of simultaneous nonzero unique capacity for Alice and Bob (see Section~\ref{sec:PMB}).

For a general mixed $\rho_{TAB}$, the consistency condition for unique capacity implied by~\eqref{eq:CPID_def},
\begin{equation}
    C_{\text{unq}A} - C_{\text{unq}B} = C_A - C_B,
\end{equation}
will not necessarily be satisfied under the identification $C_{\text{unq}A(B)} = \Gamma_{A(B)}$.
To ensure that this condition is satisfied we define the ``deficiency'' terms
\begin{eqnarray}
    \Delta_A &=& \left(C_A - C_B\right) - \left(\Gamma_A - \Gamma_B\right), \\
    \Delta_B &=& \left(C_B - C_A\right) - \left(\Gamma_B - \Gamma_A\right),
\end{eqnarray}
and the ``corrected'' one-bit unique capacities 
\begin{eqnarray}
    \gamma_A &=& \Gamma_A + \min(\Delta_A, 0), \\
    \gamma_B &=& \Gamma_B + \min(\Delta_B, 0),
\end{eqnarray}
which account for the non-trivial relationship between one-bit and two-bit capacities in the case of a mixed state.

We can then define unique two-bit capacity via
\begin{eqnarray}
    C_{\text{unq}A} &=&\max(\gamma_A, 0) - \min(\gamma_B, 0), \label{eq:C_unqA_def} \\
    C_{\text{unq}B} &=&\max(\gamma_B, 0) - \min(\gamma_A, 0), \label{eq:C_unqB_def}
\end{eqnarray}
with expressions for $C_\text{red}$ and $C_\text{syn}$ then following from the relations in~\eqref{eq:CPID_def}:
\begin{eqnarray}
    C_\text{red} &=& \frac{1}{2}\left(C_A + C_B - \abs{\gamma_A} - \abs{\gamma_B}\right), \label{eq:C_red_def}\\
    C_\text{syn} &=& C_{AB} - \frac{1}{2}\left(C_A + C_B + \abs{\gamma_A} + \abs{\gamma_B}\right). \label{eq:C_syn_def}
\end{eqnarray}

This definition of a C-PID contains four cases, determined by the sign and magnitude of $C_A - C_B$ with respect to $\Gamma_A$ and $\Gamma_B$. 
The four cases are listed in Table~\ref{tab:CPID}.
When broken into these four cases, the corrected one-bit unique capacities $\gamma_A$ and $\gamma_B$ can be eliminated entirely in favor of the two-bit capacities and the uncorrected one-bit unique capacities.
While equations~\eqref{eq:C_unqA_def}-\eqref{eq:C_syn_def} may appear somewhat convoluted, their corresponding expressions in each case are markedly simpler.

Additionally, from~\eqref{eq:C_unqA_def} and~\eqref{eq:C_unqB_def} it is clear that the unique capacity is non-negative. 
That the redundant and synergistic capacities are also non-negative is less obvious, but is proven in Appendix~\ref{sec:Non-Negativity}.

\begin{table*}
    \caption{
        \label{tab:CPID} 
        The four cases of the C-PID.
        Depending on the relative values of the difference between the individual superdense capacities ($C_A - C_B$) and the uncorrected one-bit unique capacities ($\Gamma_A,\;\Gamma_B$) the PID quantities $C_{\text{unq}A(B)}$, $C_\text{red}$ and $C_\text{syn}$ take on significantly simpler forms than suggested by equations~\eqref{eq:C_unqA_def}-\eqref{eq:C_syn_def}.
        Note that the four cases come in two complementary pairs: cases 1 and 4 are symmetric under the exchange of labels $A \leftrightarrow B$ (and similarly for cases 2 and 3).
        Only in cases 2 and 3 is it possible for both $C_{\text{unq}A}$ and $C_{\text{unq}B}$ to be nonzero.
        When $\rho_{TAB}$ is a pure state, case 1 becomes degenerate with case 3, and similarly case 2 becomes degenerate with case 4.
    }
    \begin{ruledtabular}
        \begin{tabular}{c|ccccc}
            Case & Condition & $C_{\text{unq}A}$ & $C_{\text{unq}B}$ & $C_\text{red}$ & $C_\text{syn}$ \\
            \hline
            1 & $C_A - C_B \leq -\Gamma_B$ & 0 & $C_B - C_A$ & $C_A$ & $C_{AB} - C_B$ \\
            2 & $-\Gamma_B < C_A - C_B \leq \Gamma_A - \Gamma_B$ & $C_A - C_B + \Gamma_B$ & $\Gamma_B$ & $C_B - \Gamma_B$ & $C_{AB} - C_A - \Gamma_B$ \\
            3 & $\Gamma_A - \Gamma_B < C_A - C_B \leq \Gamma_A$ & $\Gamma_A$ & $C_B - C_A + \Gamma_A$ & $C_A-\Gamma_A$ & $C_{AB} - C_B - \Gamma_A$ \\
            4 & $\Gamma_A < C_A - C_B$ & $C_A - C_B$ & 0 & $C_B$ & $C_{AB} - C_A$ \\
          \end{tabular}
    \end{ruledtabular}
\end{table*}

\subsection{\label{sec:CPID_ex} Examples}
Here we apply the proposed C-PID to some well known tripartite entangled states whose information decompositions are intuitively obvious a priori.

\subsubsection{The GHZ and W States}
Two famously inequivalent~\cite{D_r_2000} tripartite entangled states are the GHZ and W states, given by
\begin{equation}
    \ket{\text{GHZ}} = \frac{1}{\sqrt{2}}\left(\ket{000} + \ket{111}\right),
\end{equation}
and
\begin{equation}
    \ket{\text{W}} = \frac{1}{\sqrt{3}}\left(\ket{001} + \ket{010} + \ket{100}\right).
\end{equation}
Both of these states are symmetric between all three components, so the assignment of components to $T$, $A$, and $B$ is arbitrary.
Additionally, as a consequence of the symmetry, we should expect neither Alice nor Bob to possess unique information.
Indeed, we find that in both cases the individual parallel and transverse one-bit capacities are identical between Alice and Bob, giving zero unique superdense coding capacity.
The redundant capacity for both states is one bit, and the remainder of the collective superdense capacity is then synergistic.
For the GHZ state, in the context of quantum error correction, the one redundant bit can be interpreted as being purely classical (see Section~\ref{sssec:Repetition_Codes}).
Note also that, for these states, the $X$ and $Y$ capacities are invariant with respect to the phase described in Section~\ref{sec:capacity_oneBit}, thus averaging is not required.
The various capacities and resulting C-PIDs for these states are listed in Table~\ref{tab:W_and_GHZ_CPID}.

\begin{table}
    \caption{
        \label{tab:W_and_GHZ_CPID}
        Two-bit capacities, one-bit capacities, and C-PID values for the GHZ and W states in terms of the Shannon entropy function~\eqref{eq:target_entropy}.
        Due to the symmetry of these states, individual capacities for $A$ and $B$ are identical, hence only $A$ capacities are listed.
        As expected, unique information is absent in both states and both Alice and Bob's individual capacities are completely redundant.
    }
    \begin{ruledtabular}
        \begin{tabular}{c|cc|cc|ccc}
            & $C_{AB}$ & $C_A$ & $\mathcal{C}_A^\parallel$ & $\mathcal{C}_A^\perp$ & $C_{\text{unq}A}$ & $C_\text{red}$ & $C_\text{syn}$ \\
            \hline
            GHZ & 2 & 1 & 0 & 1 & 0 & 1 & 1 \\
            W & $1+h(\frac{1}{3})$ & 1 & $\frac{2}{3}$ & $1 + h(\frac{3 + \sqrt{5}}{6}) - h(\frac{1}{3})$ & 0 & 1 & $h(\frac{1}{3})$
        \end{tabular}
    \end{ruledtabular}
\end{table}

\subsubsection{Parameterized Mixed-Bell States} \label{sec:PMB}
Consider the states
\begin{eqnarray}
    \ket{\Psi_A} &=& \ket{\Phi^+}_{TA}\otimes\ket{\psi}_B, \\
    \ket{\Psi_B} &=& \ket{\Phi^+}_{TB}\otimes\ket{\phi}_A,
\end{eqnarray}
where 
\begin{equation}
    \ket{\Phi^+} = \frac{1}{\sqrt{2}}\left(\ket{00} + \ket{11}\right)
\end{equation}
is a Bell state and $\ket{\psi}$ and $\ket{\phi}$ are arbitrary single-qubit ``spectator'' states.
Consider further the convex combination of these states, which we will refer to as a ``mixed-Bell'' state:
\begin{equation}
    \rho_\text{MB}(\ket{\psi}, \ket{\phi}; p) = (1-p)\dyad{\Psi_A} + p\dyad{\Psi_B}.
\end{equation}
The parameterized state $\rho_\text{MB}(\ket{\psi}, \ket{\phi}; p)$ interpolates between a state in which the target is maximally entangled with Alice ($p=0$) and one in which the target is maximally entangled with Bob ($p=1$).
For all intermediate values of $p$ the state is mixed, with the purity being minimized at $p=1/2$.
Superdense capacities for two cases of this state (one with $\ket{\psi} = \ket{\phi} = \ket{0}$ and one with $\ket{\psi} = \ket{+}, \ket{\phi} = \ket{0}$) are plotted in Figure~\ref{fig:MixedBellCapacities}. 
\begin{figure}
    \centering
    \includegraphics[width=\columnwidth]{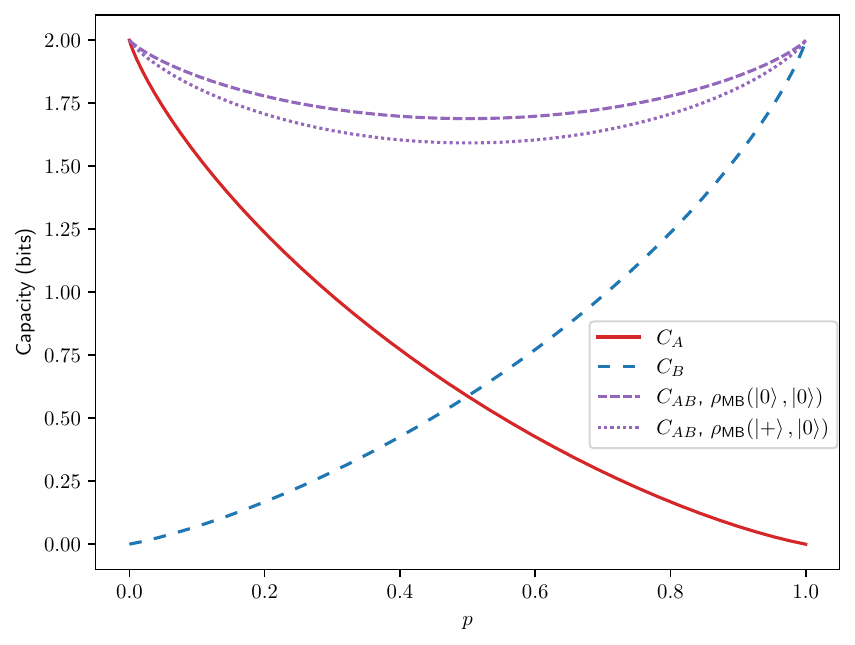}
    \caption{
        Collective and individual two-bit capacities for the states $\rho_\text{MB}(\ket{0}, \ket{0}; p)$ and $\rho_\text{MB}(\ket{+}, \ket{0}; p)$.
        For $p=0,1$, the target is maximally entangled with one of the parties, hence the joint capacity is maximized and equal to that party's individual capacity.
        The individual capacities are independent of $\ket{\psi}$ and $\ket{\phi}$, although different choices for these ``spectator'' states do slightly affect the joint capacity.
    }
    \label{fig:MixedBellCapacities}
\end{figure}
Note that the individual capacities $C_A$ and $C_B$ are independent of $\ket{\psi}$ and $\ket{\phi}$.

The C-PIDs for these two states are plotted in Figures~\ref{fig:MixedBellCPID_1} and~\ref{fig:MixedBellCPID_2}, respectively.
\begin{figure*}
    \subfloat[\label{fig:MixedBellCPID_1}]{%
        \includegraphics[width=\columnwidth]{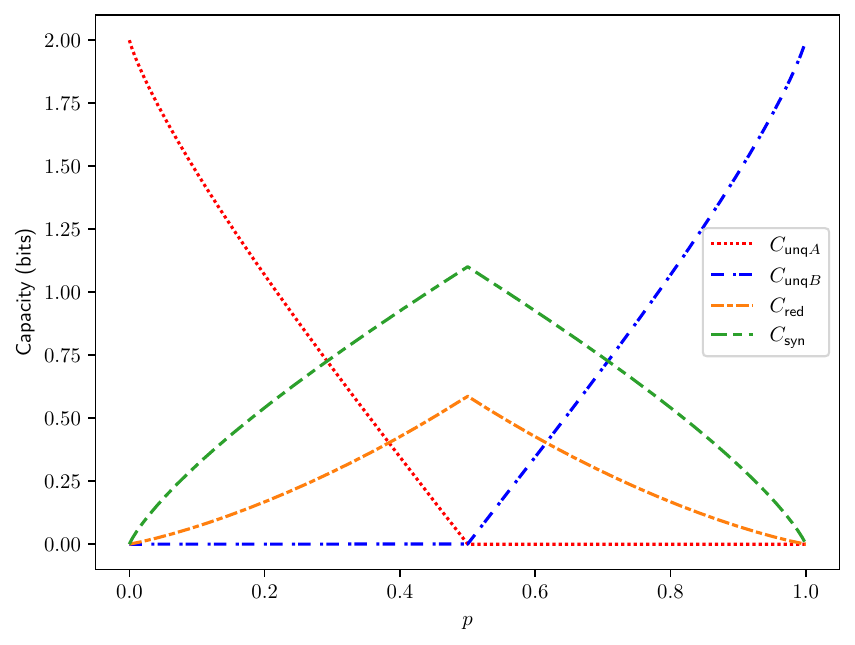}%
    }
    \hspace*{\fill}%
    \subfloat[\label{fig:MixedBellCPID_2}]{%
        \includegraphics[width=\columnwidth]{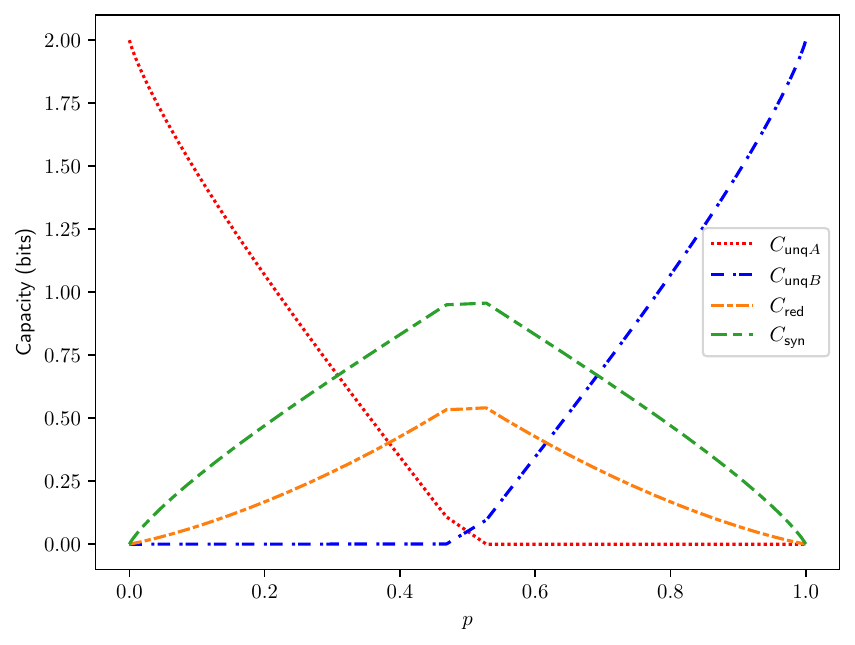}%
    }
    \caption{
        C-PID for two instances of the mixed-Bell state. 
        In a) the spectator states are $\ket{\psi} = \ket{\phi} = \ket{0}$.
        The symmetry of this state precludes the possibility of Alice and Bob simultaneously having unique capacity.
        In b) the spectator states are $\ket{\psi} = \ket{+}$ and $\ket{\phi} = \ket{0}$.
        The asymmetry in this case allows for a range of $p$ values ($0.47 \lesssim p \lesssim 0.52$) over which \textit{both} Alice and Bob have nonzero unique capacity.
        Note that in this case the state is not invariant under $A\leftrightarrow B$ and $p \leftrightarrow 1-p$, hence the slight asymmetry in the plot.
    }
\end{figure*}
At the extremes ($p=0,1$), the maximally entangled party has two bits of capacity. 
Indeed, this is entirely unique capacity, as the other party is completely independent from the target.

In the symmetric case ($\ket{\psi} = \ket{\phi} = \ket{0}$), the state is invariant under $p \leftrightarrow 1-p$ and $A \leftrightarrow B$.
Consequently, only the party that is more strongly entangled with the target has nonzero unique capacity.
For the asymmetric case there exists a range of values of $p$ for which Bob's parallel one-bit capacity is greater than Alice's ($\mathcal{C}_B^\parallel > \mathcal{C}_A^\parallel$), while Alice's one-bit transverse capacity is greater than Bob's ($\mathcal{C}_A^\perp > \mathcal{C}_B^\perp$).
In this region, then, both Alice and Bob have nonzero unique capacity.

In both cases the synergistic capacity is strictly greater than the redundant capacity for all $0<p<1$.
This is to be expected from considering the joint and individual superdense capacities, as $C_{AB} > C_A + C_B$ in this range.
From the defining relations of the C-PID~\eqref{eq:CPID_def}, we see that
\begin{equation}
    C_\text{syn} - C_\text{red} = C_{AB} - (C_A + C_B),
\end{equation}
thus for $p \neq 0,1$ we must have $C_\text{syn} > C_\text{red}$.

\section{\label{sec:QEC} Application: Stabilizer Quantum Error Correction Codes}
Like their classical counterparts, quantum error correction (QEC) codes are said to protect quantum information by ``redundantly'' encoding logical qubits into a larger number of physical qubits. 
Stabilizer QEC codes are a particular class of codes in which the set of possible codewords is defined as the simultaneous +1 eigenspace of an abelian subgroup (the ``stabilizer'') of the $n$-qubit Pauli group.
The stabilizer framework allows compact specification of codes, direct determination of their error correction and detection capabilities, and the construction of encoding, decoding, and syndrome measurement circuits.

\subsection{The Model}
To investigate redundant and unique information in stabilizer QEC codes, we consider a simple model (based on the model analyzed in~\cite{Cerf_1997}) in which one of a pair of qubits in a pure state is encoded into a logical qubit consisting of $n$ physical qubits according to various $[[n, 1, d]]$ stabilizer QEC codes.
Consider states of the form
\begin{eqnarray} \label{eq:psi_0}
    \ket{\psi_0} &=& \sqrt{\lambda}\ket{00} + \sqrt{1-\lambda}\ket{11} \\
    \to \ket{\psi_\text{enc}} &=& \sqrt{\lambda}\ket{00_\text{L}} + \sqrt{1-\lambda}\ket{11_\text{L}} \label{eq:encoded_state},
\end{eqnarray} 
where $\ket{0_\text{L}}$ and $\ket{1_\text{L}}$ are logical qubits, $\lambda$ and $1-\lambda$ are the Schmidt coefficients, and $0\leq \lambda\leq 1/2$.
Note that, while $\ket{\psi_0}$ as written in~\eqref{eq:psi_0} is defined on a 2-dimensional subspace of the full 4-dimensional 2-qubit Hilbert space, this is without loss of generality as the PID defined previously is invariant under local unitary transformations.

The unencoded half of the initial two-qubit state is designated as the target, while the physical qubits comprising the encoded logical qubit are numbered $1\dots n$ and constitute the $A$ and $B$ qubits for the purposes of the PID analysis.
One can then consider how information about the target qubit is distributed among various subsets of the physical qubits.

The mutual information and superdense capacity between the target and the initial (unencoded) qubit are
\begin{eqnarray}
    I(T;AB) &=& 2H_T, \nonumber \\
    C_{AB} &=& 1 + H_T,
\end{eqnarray}
where
\begin{equation} \label{eq:target_entropy}
   H_T \coloneqq h(\lambda) = -\lambda\log\lambda - (1-\lambda)\log(1-\lambda)
\end{equation}
is the von Neumann entropy of the reduced state of the target qubit (equivalently, the Shannon entropy of the Schmidt coefficients), and logarithms are taken in base-2 (i.e.,\ entropy and information are measured in bits).
Since the encoding operation is an isometry, the mutual information and superdense capacity between the target and the physical qubits comprising the logical qubit remain unchanged.

\subsection{Superdense Capacity for Stabilizer Subsets}
The first question to consider is: what can the superdense capacity between the target and some subset of the physical qubits be?
It may initially seem like any value from zero to the full $1 + H_T$ bits could be possible, but in fact (as we prove in Appendix~\ref{sec:StabilizerStates}), for any subset $A \subseteq \{1\dots n\}$, the superdense capacity is restricted to three possible values:
\[ C_A \in \{1 - H_T, 1, 1 + H_T\}, \]
corresponding to 
\[ I(T;A) \in \{0, H_T, 2H_T\}. \]

To see that these are the only possible values, consider the prototypical example of superdense coding.
Letting $\lambda = 1/2$, the state~\eqref{eq:encoded_state} becomes
\begin{equation}
    \ket{\psi_\text{enc}} = \frac{1}{\sqrt{2}}\left(\ket{00_L} + \ket{11_L}\right),
\end{equation}
a maximally-entangled Bell state ($H_T = 1$).
As per the standard prescription, the target can send two bits of information by choosing to act on their qubit with one of $\{\Id, \sigma_x, \sigma_y, \sigma_z\}$, thereby transforming the joint state into any of the four Bell states (albeit with one of the qubits encoded as a logical qubit).
When the holder of the encoded qubit receives the target, they can determine which Bell state the system is in (and hence which of the four operations was performed on the target) by measuring the operators $Z\bar{Z}$ and $X\bar{X}$, where $\bar{Z}$ and $\bar{X}$ are logical Paulis acting on the encoded qubit.

If the receiver possesses only a subset $A$ of the physical qubits, one of three cases must hold:
\begin{enumerate}
    \item Both $\bar{Z}$ and $\bar{X}$ are supported on $A$.
    \item Exactly one of $\bar{Z}$ and $\bar{X}$ is supported on $A$.
    \item Neither operator is supported on $A$.
\end{enumerate}
Clearly, in case 1 the receiver can determine exactly which of the four operations was performed on the target and thus gains two bits of information.
This case can be thought of as the ``quantum-advantage'' case: entanglement permits a super-classical amount of information transmission.
In case 2, the receiver can only narrow the possibilities down to two out of four, gaining one bit of information. 
Note that this case is purely classical: when the target and the encoded qubit are unentangled ($H_T \to 0$), all three cases reduce to this one.
Finally, case 3 is the ``quantum-disadvantage'' case: the entanglement between the target and the qubits not in $A$ degrades the ability to transmit even the classically allowed amount of information.

Note that, because of the relation $Y = iXZ$, only two of the logical Paulis are independent.
In the case that two are supported on a given subset, the third is automatically as well.
The focus on $\bar{Z}$ and $\bar{X}$ previously and hereafter is merely a convention.

\subsection{Stabilizer C-PID}
The restricted set of possible values for the superdense capacity of a subset of encoding qubits is a result of the rich structure of stabilizer codes.
One example of a consequence of this structure is the so-called ``cleaning lemma''~\cite{Bravyi_2009,haah2013latticequantumcodesexotic}, which effectively states that any subset of physical qubits either supports both logical operators (while its complement supports none), or both the subset and its complement support the same single operator.
This then has substantial implications for the decomposition into unique, redundant, and synergistic capacities with respect to disjoint subsets $A$ and $B$ (possessed by Alice and Bob).
In fact, there are only eleven possible distinct combinations of capacities $(C_{AB}, C_A, C_B)$, each with a corresponding unique decomposition.
These cases are summarized in Table~\ref{tab:QEC_CPID_all_cases}.
\begin{table}
    \caption{
        \label{tab:QEC_CPID_all_cases}
        The eleven distinct superdense capacity combinations for all $k=1$ stabilizer codes. 
        Every C-PID component is one of the five values $\{0,\ 1-H_T,\ H_T,\ 1,\ 2H_T\}$.
    }
    \begin{ruledtabular}
        \begin{tabular}{ccc|cccc}
            $C_{AB}$ & $C_A$ & $C_B$ & $C_{\text{unq}A}$ & $C_{\text{unq}B}$ & $C_{\text{red}}$ & $C_{\text{syn}}$ \\
            \hline
            $1-H_T$ & $1-H_T$ & $1-H_T$ & $0$    & $0$    & $1-H_T$ & $0$    \\
            $1$     & $1-H_T$ & $1-H_T$ & $0$    & $0$    & $1-H_T$ & $H_T$  \\
            $1$     & $1-H_T$ & $1$     & $0$    & $H_T$  & $1-H_T$ & $0$    \\
            $1$     & $1$     & $1-H_T$ & $H_T$  & $0$    & $1-H_T$ & $0$    \\
            $1$     & $1$     & $1$     & $0$    & $0$    & $1$     & $0$    \\
            $1+H_T$ & $1-H_T$ & $1-H_T$ & $0$    & $0$    & $1-H_T$ & $2H_T$ \\
            $1+H_T$ & $1-H_T$ & $1$     & $0$    & $H_T$  & $1-H_T$ & $H_T$  \\
            $1+H_T$ & $1-H_T$ & $1+H_T$ & $0$    & $2H_T$ & $1-H_T$ & $0$    \\
            $1+H_T$ & $1$     & $1-H_T$ & $H_T$  & $0$    & $1-H_T$ & $H_T$  \\
            $1+H_T$ & $1$     & $1$     & $0$    & $0$    & $1$     & $H_T$  \\
            $1+H_T$ & $1+H_T$ & $1-H_T$ & $2H_T$ & $0$    & $1-H_T$ & $0$    \\
        \end{tabular}
    \end{ruledtabular}
\end{table}

\subsubsection{Unique Information}
From the three cases discussed above, it is clear that one subset possessing unique capacity with respect to another corresponds to that subset having access to a logical operator that the other does not. 
This implies that a subset has no unique information with respect to another if and only if it fails to support more logical operators than the other.
There is another (equivalent) interpretation for a subset possessing unique capacity in terms of erasure correctability: a correctable subset can possess no unique information.
Were a correctable subset to have unique information, then the state could not be perfectly reconstructed after the erasure of those qubits.
A code with distance $d$ can correct up to $d-1$ erasures, therefore no subset of $d-1$ or fewer qubits can possess unique information.
Similarly, a subset whose complement is correctable contains the full $1+H_T$ bits of capacity, thus any subset of $n-(d-1)$ or more qubits contains $2H_T$ bits of unique information (i.e.,\ the full $1+H_T$ bits minus the redundant $1-H_T$ bits).

\subsubsection{Redundant Information}
For general states ($\lambda \text{ not necessarily } 1/2$), the minimum value of the superdense capacity for any subset is $1-H_T$ bits (the ``quantum-disadvantage'' case).
Since any two subsets must have at least this capacity, it is clearly redundant. 
In fact, by the cleaning lemma (see Appendix~\ref{sec:StabilizerStates}), the redundant capacity for any two subsets is either $1 - H_T$ bits, or exactly one bit.
This is of course consistent with the aforementioned exclusion principle for superdense coding---no two components can simultaneously have superdense capacities of greater than 1 bit.

\subsubsection{Synergistic Information}
The presence of synergistic capacity between two subsets has an immediate and obvious interpretation in light of the three cases: two subsets possess synergistic capacity with respect to one another exactly when their union supports a logical operator that neither does individually. 
In partitions for which both $\abs{A},\abs{B} \lesssim n/2$ but $\abs{A}+\abs{B} \gtrsim n - (d-1)$ it is common for neither the $A$ nor the $B$ qubits to have access to any logical operators ($C_A = C_B = 1-H_T$) while their union supports all logicals ($C_{AB} = 1+H_T$), thus exemplifying the interpretation of synergistic information corresponding to ``the whole being greater than the sum of its parts'' (given that $H_T > 1/3$, at least).

\subsection{Example Codes}
\subsubsection{Bit-Flip / Phase-Flip} \label{sssec:Repetition_Codes}
The bit-flip and phase-flip codes are examples of $n$-repetition codes, i.e.,\ codes of the form
\begin{equation}
    \label{eq:rep_code}
    \ket{i} \to \ket{\psi_i}^{\otimes n},
\end{equation}
for some orthogonal set of states $\{\ket{\psi_i}\}$.
For the bit-flip code $\ket{\psi_i} = \ket{i}$, while for the phase-flip the $\ket{\psi_i}$ are the $\ket{+}$ and $\ket{-}$ states.
Clearly, all codes of the form~\eqref{eq:rep_code} are unitarily equivalent and thus have identical information measures.
Stabilizer generators and logical $X$ and $Z$ representatives for the bit-flip code are listed in Table~\ref{tab:BitFlip_stabilizer}, while superdense capacities and corresponding C-PIDs are listed in Table~\ref{tab:BitFlip_CPID}.

For the $[[3,1,1]]$ bit-flip code, states are encoded as
\begin{equation}
    \alpha\ket{0} + \beta\ket{1} \to \alpha\ket{000} + \beta\ket{111}.
\end{equation}
Note that for $\alpha = \beta = 1/\sqrt{2}$ (i.e.,\ $H_T=1$) this is exactly the GHZ state, but even for arbitrary coefficients (and in fact more generally for any $n$) this encoded state still possesses the property that the reduced state resulting from tracing out any qubit is completely unentangled.
The superdense capacity for any strict subset of qubits is precisely one (redundant) bit---the classical case discussed above.
Only when $A$ and $B$ together possess all physical qubits do they have the full capacity of $1+H_T$ bits (1 redundant bit plus an additional $H_T$ synergistic bits).

In terms of the stabilizers and logical operators of these codes, this informational structure is a direct result of the fact that exactly one logical operator ($\bar{X}$ in the case of the bit-flip code) has no representative that acts trivially on any qubit, while the other has single-qubit representatives for every qubit.

\begin{table}
    \caption{
        \label{tab:BitFlip_stabilizer}
        Stabilizer generators and logical operators for the three-qubit bit-flip code.
    }
    \begin{ruledtabular}
        \begin{tabular}{c|ccc}
            $g_1$ & $Z$ & $Z$ & $I$ \\
            $g_2$ & $I$ & $Z$ & $Z$ \\
            \hline
            $\bar{Z}$ & $Z$ & $I$ & $I$ \\
            $\bar{X}$ & $X$ & $X$ & $X$ \\
        \end{tabular}
    \end{ruledtabular}
\end{table}

\begin{table}
    \caption{
        \label{tab:BitFlip_CPID}
        Superdense capacities and C-PID for the two possible partitions of the bit-flip and phase-flip codes. Any strict subset of the physical qubits has 1 redundant bit of information, and only all qubits collectively carry the full $1+H_T$ bits.
    }
    \begin{ruledtabular}
        \begin{tabular}{c|ccc|cccc}
            $(\abs{A},\abs{B})$ & $C_{AB}$ & $C_A$ & $C_B$ & $C_{\text{unq}A}$ & $C_{\text{unq}B}$ & $C_{\text{red}}$ & $C_{\text{syn}}$ \\
            \hline
            $(1,1)$ & $1$     & $1$ & $1$ & $0$ & $0$ & $1$ & $0$ \\
            \hline
            $(1,2)$ & $1+H_T$ & $1$ & $1$ & $0$ & $0$ & $1$ & $H_T$ \\
        \end{tabular}
    \end{ruledtabular}
\end{table}

\subsubsection{The Five-Qubit Perfect Code}
The five-qubit code is the smallest code capable of correcting arbitrary single-qubit errors. 
It is an example of a ``perfect'' code, as it saturates the quantum Hamming bound~\cite{laflamme1996perfectquantumerrorcorrection}, and is also the only such $[[5,1,3]]$ code up to local Clifford operations and qubit permutations.
Stabilizer generators and logical operators for the five-qubit code are displayed in Table~\ref{tab:FiveQubit_stabilizer}.

Superdense capacities and the corresponding C-PID for all partitions of the five-qubit code are listed in Table~\ref{tab:FiveQubit_CPID}.
Note that there is sufficient symmetry that the capacity of a subset of qubits is determined only by the size of the subset and not the specific qubits it contains.
Specifically, any subset of three or more qubits contains all information about the target ($C=1+H_T$), while any two or fewer qubits constitute a correctable set ($C=1-H_T$).
In fact, this is precisely the well-known $((3,5))$ Quantum Secret Sharing capability of this code~\cite{Cleve_1999}.
Since the code has five qubits, $\abs{B} \geq 3$ implies $\abs{A} \leq 2$, so Bob will have unique information exactly when he holds more than half the qubits.
Synergistic information is present only when both $\abs{A},\abs{B}\leq 2$, but $\abs{A}+\abs{B}>2$.
Note also that the classical case $C=1$ is precluded: any subset is either correctable or its complement is.

\begin{table}
    \caption{
        \label{tab:FiveQubit_stabilizer}
        Stabilizer generators and logical operators for the five-qubit perfect code.
    }
    \begin{ruledtabular}
        \begin{tabular}{c|ccccc}
            $g_1$ & $X$ & $Z$ & $Z$ & $X$ & $I$ \\
            $g_2$ & $I$ & $X$ & $Z$ & $Z$ & $X$ \\
            $g_3$ & $X$ & $I$ & $X$ & $Z$ & $Z$ \\
            $g_4$ & $Z$ & $X$ & $I$ & $X$ & $Z$ \\
            \hline
            $\bar{Z}$ & $Z$ & $Z$ & $Z$ & $Z$ & $Z$ \\
            $\bar{X}$ & $X$ & $X$ & $X$ & $X$ & $X$ \\
        \end{tabular}
    \end{ruledtabular}
\end{table}
\begin{table*}[h]
    \caption{
        \label{tab:FiveQubit_CPID}
        C-PID for all possible groupings of the physical qubits for the five-qubit code.
        Note that the five-qubit code is sufficiently symmetric that the superdense capacities and corresponding PIDs depend only on the sizes of the A/B subsets, and not \textit{which} qubits are in them.
    }
    \begin{ruledtabular}
        \begin{tabular}{c|ccc|cccc}
            $(\abs{A},\abs{B})$ & $C_{AB}$ & $C_A$ & $C_B$ & $C_{\text{unq}A}$ & $C_{\text{unq}B}$ & $C_{\text{red}}$ & $C_{\text{syn}}$ \\
            \hline
            $(1,1)$ & $1-H_T$ & $1-H_T$ & $1-H_T$ & $0$ & $0$    & $1-H_T$ & $0$ \\
            $(1,2)$ & $1+H_T$ & $1-H_T$ & $1-H_T$ & $0$ & $0$    & $1-H_T$ & $2H_T$ \\
            $(1,3)$ & $1+H_T$ & $1-H_T$ & $1+H_T$ & $0$ & $2H_T$ & $1-H_T$ & $0$ \\
            $(1,4)$ & $1+H_T$ & $1-H_T$ & $1+H_T$ & $0$ & $2H_T$ & $1-H_T$ & $0$ \\
            $(2,2)$ & $1+H_T$ & $1-H_T$ & $1-H_T$ & $0$ & $0$    & $1-H_T$ & $2H_T$ \\
            $(2,3)$ & $1+H_T$ & $1-H_T$ & $1+H_T$ & $0$ & $2H_T$ & $1-H_T$ & $0$ \\
        \end{tabular}
    \end{ruledtabular}
\end{table*}

\subsubsection{The Steane Code}
The $[[7,1,3]]$ Steane code is an example of a self-dual CSS code~\cite{calderbank1996, steane1996multiple} constructed from a classical $[7,4,3]$ Hamming code.
Stabilizer generators and logical operators for the Steane code are listed in Table~\ref{tab:Steane_stabilizer}, while superdense capacities and their corresponding C-PIDs for representative partitions are listed in Table~\ref{tab:Steane_CPID}.

Unlike the repetition codes and the five-qubit code, the superdense capacities for a subset of the Steane code's physical qubits depend not only on its size, but also on which qubits it contains.
For instance, the 3-qubit subset $\{1,2,3\}$ supports all logical operators, while the subset $\{1,2,4\}$ supports none.

As a distance-3 code, any subset of 2 or fewer qubits is correctable ($C=1-H_T$) while any subset of 5 or more qubits possesses the full $1+H_T$ bits of capacity.
Subsets of size 3-4 could then in principle exhibit the classical case ($C=1$), but this is in fact ruled out due to the code being self-dual: the logical $X$ and $Z$ operators have identical support---no subset can support one but not the other.

\begin{table*}[h]
    \caption{
        \label{tab:Steane_stabilizer}
        Stabilizer generators and logical operators for the seven-qubit Steane code.
    }
    \begin{ruledtabular}
        \begin{tabular}{c|ccccccc}
            $g_1$ & $I$ & $I$ & $I$ & $X$ & $X$ & $X$ & $X$ \\
            $g_2$ & $I$ & $X$ & $X$ & $I$ & $I$ & $X$ & $X$ \\
            $g_3$ & $X$ & $I$ & $X$ & $I$ & $X$ & $I$ & $X$ \\
            $g_4$ & $I$ & $I$ & $I$ & $Z$ & $Z$ & $Z$ & $Z$ \\
            $g_5$ & $I$ & $Z$ & $Z$ & $I$ & $I$ & $Z$ & $Z$ \\
            $g_6$ & $Z$ & $I$ & $Z$ & $I$ & $Z$ & $I$ & $Z$ \\
            \hline
            $\bar{Z}$ & $Z$ & $Z$ & $Z$ & $Z$ & $Z$ & $Z$ & $Z$ \\
            $\bar{X}$ & $X$ & $X$ & $X$ & $X$ & $X$ & $X$ & $X$ \\
        \end{tabular}
    \end{ruledtabular}
\end{table*}
\begin{table*}[h]
    \caption{
        \label{tab:Steane_CPID}
        Example partitions and corresponding superdense capacities and PIDs for the seven-qubit Steane code.
        Note that, unlike the five-qubit code, the subset sizes alone do not determine the capacities and decompositions. For example, there are two kinds of partitions in which $\abs{A}=1$ and $\abs{B}=2$.
    }
    \begin{ruledtabular}
        \begin{tabular}{c|cc|ccc|cccc}
            $(\abs{A},\abs{B})$ & $A$ & $B$ & $C_{AB}$ & $C_A$ & $C_B$ & $C_{\text{unq}A}$ & $C_{\text{unq}B}$ & $C_{\text{red}}$ & $C_{\text{syn}}$ \\
            \hline
            $(1,1)$ & $\{1\}$ & $\{2\}$             & $1-H_T$ & $1-H_T$ & $1-H_T$ & $0$    & $0$    & $1-H_T$ & $0$ \\
            \hline
            $(1,2)$ & $\{1\}$ & $\{2,3\}$           & $1+H_T$ & $1-H_T$ & $1-H_T$ & $0$    & $0$    & $1-H_T$ & $2H_T$ \\
                    & $\{1\}$ & $\{2,4\}$           & $1-H_T$ & $1-H_T$ & $1-H_T$ & $0$    & $0$    & $1-H_T$ & $0$ \\
            \hline
            $(1,3)$ & $\{1\}$ & $\{2,3,4\}$         & $1+H_T$ & $1-H_T$ & $1-H_T$ & $0$    & $0$    & $1-H_T$ & $2H_T$ \\
                    & $\{1\}$ & $\{2,4,6\}$         & $1+H_T$ & $1-H_T$ & $1+H_T$ & $0$    & $2H_T$ & $1-H_T$ & $0$ \\
                    & $\{1\}$ & $\{2,4,7\}$         & $1-H_T$ & $1-H_T$ & $1-H_T$ & $0$    & $0$    & $1-H_T$ & $0$ \\
            \hline
            $(1,4)$ & $\{1\}$ & $\{2,3,4,5\}$       & $1+H_T$ & $1-H_T$ & $1-H_T$ & $0$    & $0$    & $1-H_T$ & $2H_T$ \\
                    & $\{1\}$ & $\{2,3,4,6\}$       & $1+H_T$ & $1-H_T$ & $1+H_T$ & $0$    & $2H_T$ & $1-H_T$ & $0$ \\
            \hline
            $(1,5)$ & $\{1\}$ & $\{2,3,4,5,6\}$     & $1+H_T$ & $1-H_T$ & $1+H_T$ & $0$    & $2H_T$ & $1-H_T$ & $0$ \\
            \hline
            $(1,6)$ & $\{1\}$ & $\{2,3,4,5,6,7\}$   & $1+H_T$ & $1-H_T$ & $1+H_T$ & $0$    & $2H_T$ & $1-H_T$ & $0$ \\
            \hline
            $(2,2)$ & $\{1,2\}$ & $\{3,4\}$         & $1+H_T$ & $1-H_T$ & $1-H_T$ & $0$    & $0$    & $1-H_T$ & $2H_T$ \\
                    & $\{1,2\}$ & $\{4,7\}$         & $1-H_T$ & $1-H_T$ & $1-H_T$ & $0$    & $0$    & $1-H_T$ & $0$ \\
            \hline
            $(2,3)$ & $\{1,2\}$ & $\{3,4,5\}$       & $1+H_T$ & $1-H_T$ & $1-H_T$ & $0$    & $0$    & $1-H_T$ & $2H_T$ \\
                    & $\{1,2\}$ & $\{3,4,7\}$       & $1+H_T$ & $1-H_T$ & $1+H_T$ & $0$    & $2H_T$ & $1-H_T$ & $0$ \\
            \hline
            $(2,4)$ & $\{1,2\}$ & $\{3,4,5,6\}$     & $1+H_T$ & $1-H_T$ & $1+H_T$ & $0$    & $2H_T$ & $1-H_T$ & $0$ \\
                    & $\{1,2\}$ & $\{4,5,6,7\}$     & $1+H_T$ & $1-H_T$ & $1-H_T$ & $0$    & $0$    & $1-H_T$ & $2H_T$ \\
            \hline
            $(2,5)$ & $\{1,2\}$ & $\{3,4,5,6,7\}$   & $1+H_T$ & $1-H_T$ & $1+H_T$ & $0$    & $2H_T$ & $1-H_T$ & $0$ \\
            \hline
            $(3,3)$ & $\{1,2,3\}$ & $\{4,5,6\}$     & $1+H_T$ & $1+H_T$ & $1-H_T$ & $2H_T$ & $0$    & $1-H_T$ & $0$ \\
                    & $\{1,2,4\}$ & $\{3,5,6\}$     & $1+H_T$ & $1-H_T$ & $1+H_T$ & $0$    & $2H_T$ & $1-H_T$ & $0$ \\
                    & $\{1,2,4\}$ & $\{3,5,7\}$     & $1+H_T$ & $1-H_T$ & $1-H_T$ & $0$    & $0$    & $1-H_T$ & $2H_T$ \\
            \hline
            $(3,4)$ & $\{1,2,3\}$ & $\{4,5,6,7\}$   & $1+H_T$ & $1+H_T$ & $1-H_T$ & $2H_T$ & $0$    & $1-H_T$ & $0$ \\
                    & $\{1,2,4\}$ & $\{3,5,6,7\}$   & $1+H_T$ & $1-H_T$ & $1+H_T$ & $0$    & $2H_T$ & $1-H_T$ & $0$ \\
        \end{tabular}
    \end{ruledtabular}
\end{table*}

\subsubsection{The Shor Code}
The Shor code is a $[[9,1,3]]$ code formed by concatenating the bit-flip and phase-flip codes.
Specifically, the logical qubit is first phase-flip encoded into three qubits, each of which is then bit-flip encoded into a block of three.
Stabilizer generators and logical operators for the Shor code are listed in Table~\ref{tab:Shor_stabilizer}; superdense capacities and corresponding C-PIDs are listed in Table~\ref{tab:Shor_CPID}.

The Shor code has a sufficiently rich structure that all eleven possible superdense capacity combinations are realized for various partitions of the physical qubits. 
The concatenated structure of the code is apparent in the various $A$/$B$ partitions.
Specifically, when a subset aligns with the block structure of either the inner or outer code, its capacity matches that of the corresponding subset in the unconcatenated code.
The subset $\{1,2,3\}$, for instance, corresponds to the first block of the inner (bit-flip) code and possesses one bit of capacity just as all individual qubits of the bit-flip code do.
Alternatively, a subset containing at least one qubit from each of $\{\{1,2,3\}, \{4,5,6\}, \{7,8,9\}\}$ corresponds to a qubit of the outer (phase-flip) code, and hence also possesses a capacity of one bit.

\begin{table}[h]
    \caption{
        \label{tab:Shor_stabilizer}
        Stabilizer generators and logical operators for the nine-qubit Shor code.
    }
    \begin{ruledtabular}
        \begin{tabular}{c|ccccccccc}
            $g_1$ & $Z$ & $Z$ & $I$ & $I$ & $I$ & $I$ & $I$ & $I$ & $I$ \\
            $g_2$ & $I$ & $Z$ & $Z$ & $I$ & $I$ & $I$ & $I$ & $I$ & $I$ \\
            $g_3$ & $I$ & $I$ & $I$ & $Z$ & $Z$ & $I$ & $I$ & $I$ & $I$ \\
            $g_4$ & $I$ & $I$ & $I$ & $I$ & $Z$ & $Z$ & $I$ & $I$ & $I$ \\
            $g_5$ & $I$ & $I$ & $I$ & $I$ & $I$ & $I$ & $Z$ & $Z$ & $I$ \\
            $g_6$ & $I$ & $I$ & $I$ & $I$ & $I$ & $I$ & $I$ & $Z$ & $Z$ \\
            $g_7$ & $X$ & $X$ & $X$ & $X$ & $X$ & $X$ & $I$ & $I$ & $I$ \\
            $g_8$ & $I$ & $I$ & $I$ & $X$ & $X$ & $X$ & $X$ & $X$ & $X$ \\
            \hline
            $\bar{Z}$ & $X$ & $X$ & $X$ & $X$ & $X$ & $X$ & $X$ & $X$ & $X$ \\
            $\bar{X}$ & $Z$ & $Z$ & $Z$ & $Z$ & $Z$ & $Z$ & $Z$ & $Z$ & $Z$ \\
        \end{tabular}
    \end{ruledtabular}
\end{table}
\begin{table*}[h]
    \caption{
        \label{tab:Shor_CPID}
        C-PID for the nine-qubit Shor code for all inequivalent partitions, labeled by the specific qubit subsets $A$ and $B$.
    }
    \begin{ruledtabular}
        \begin{tabular}{c|cc|ccc|cccc}
            $(\abs{A},\abs{B})$ & $A$ & $B$ & $C_{AB}$ & $C_A$ & $C_B$ & $C_{\text{unq}A}$ & $C_{\text{unq}B}$ & $C_{\text{red}}$ & $C_{\text{syn}}$ \\
            \hline
            $(1,1)$ & $\{1\}$ & $\{2\}$                 & $1-H_T$ & $1-H_T$ & $1-H_T$ & $0$    & $0$    & $1-H_T$ & $0$ \\
            \hline
            $(1,2)$ & $\{1\}$ & $\{2,3\}$               & $1$     & $1-H_T$ & $1-H_T$ & $0$    & $0$    & $1-H_T$ & $H_T$ \\
                    & $\{1\}$ & $\{2,4\}$               & $1-H_T$ & $1-H_T$ & $1-H_T$ & $0$    & $0$    & $1-H_T$ & $0$ \\
            \hline
            $(1,3)$ & $\{1\}$ & $\{2,3,4\}$             & $1$     & $1-H_T$ & $1-H_T$ & $0$    & $0$    & $1-H_T$ & $H_T$ \\
                    & $\{1\}$ & $\{2,4,5\}$             & $1-H_T$ & $1-H_T$ & $1-H_T$ & $0$    & $0$    & $1-H_T$ & $0$ \\
                    & $\{1\}$ & $\{2,4,7\}$             & $1$     & $1-H_T$ & $1$     & $0$    & $H_T$  & $1-H_T$ & $0$ \\
            \hline
            $(1,4)$ & $\{1\}$ & $\{2,3,4,5\}$           & $1$     & $1-H_T$ & $1-H_T$ & $0$    & $0$    & $1-H_T$ & $H_T$ \\
                    & $\{1\}$ & $\{2,3,4,7\}$           & $1+H_T$ & $1-H_T$ & $1$     & $0$    & $H_T$  & $1-H_T$ & $H_T$ \\
                    & $\{1\}$ & $\{2,4,5,6\}$           & $1$     & $1-H_T$ & $1$     & $0$    & $H_T$  & $1-H_T$ & $0$ \\
            \hline
            $(1,5)$ & $\{1\}$ & $\{2,3,4,5,6\}$         & $1$     & $1-H_T$ & $1$     & $0$    & $H_T$  & $1-H_T$ & $0$ \\
                    & $\{1\}$ & $\{2,3,4,5,7\}$         & $1+H_T$ & $1-H_T$ & $1$     & $0$    & $H_T$  & $1-H_T$ & $H_T$ \\
                    & $\{1\}$ & $\{2,4,5,6,7\}$         & $1+H_T$ & $1-H_T$ & $1+H_T$ & $0$    & $2H_T$ & $1-H_T$ & $0$ \\
            \hline
            $(1,6)$ & $\{1\}$ & $\{2,3,4,5,6,7\}$       & $1+H_T$ & $1-H_T$ & $1+H_T$ & $0$    & $2H_T$ & $1-H_T$ & $0$ \\
                    & $\{1\}$ & $\{2,3,4,5,7,8\}$       & $1+H_T$ & $1-H_T$ & $1$     & $0$    & $H_T$  & $1-H_T$ & $H_T$ \\
            \hline
            $(1,7)$ & $\{1\}$ & $\{2,3,4,5,6,7,8\}$     & $1+H_T$ & $1-H_T$ & $1+H_T$ & $0$    & $2H_T$ & $1-H_T$ & $0$ \\
            \hline
            $(1,8)$ & $\{1\}$ & $\{2,3,4,5,6,7,8,9\}$   & $1+H_T$ & $1-H_T$ & $1+H_T$ & $0$    & $2H_T$ & $1-H_T$ & $0$ \\
            \hline
            $(2,2)$ & $\{1,2\}$ & $\{3,4\}$             & $1$     & $1-H_T$ & $1-H_T$ & $0$    & $0$    & $1-H_T$ & $H_T$ \\
                    & $\{1,2\}$ & $\{4,5\}$             & $1-H_T$ & $1-H_T$ & $1-H_T$ & $0$    & $0$    & $1-H_T$ & $0$ \\
            \hline
            $(2,3)$ & $\{1,2\}$ & $\{3,4,5\}$           & $1$     & $1-H_T$ & $1-H_T$ & $0$    & $0$    & $1-H_T$ & $H_T$ \\
                    & $\{1,2\}$ & $\{3,4,7\}$           & $1+H_T$ & $1-H_T$ & $1$     & $0$    & $H_T$  & $1-H_T$ & $H_T$ \\
                    & $\{1,2\}$ & $\{4,5,6\}$           & $1$     & $1-H_T$ & $1$     & $0$    & $H_T$  & $1-H_T$ & $0$ \\
                    & $\{1,4\}$ & $\{2,3,7\}$           & $1+H_T$ & $1-H_T$ & $1-H_T$ & $0$    & $0$    & $1-H_T$ & $2H_T$ \\
            \hline
            $(2,4)$ & $\{1,2\}$ & $\{3,4,5,6\}$         & $1$     & $1-H_T$ & $1$     & $0$    & $H_T$  & $1-H_T$ & $0$ \\
                    & $\{1,2\}$ & $\{3,4,5,7\}$         & $1+H_T$ & $1-H_T$ & $1$     & $0$    & $H_T$  & $1-H_T$ & $H_T$ \\
                    & $\{1,2\}$ & $\{4,5,7,8\}$         & $1$     & $1-H_T$ & $1-H_T$ & $0$    & $0$    & $1-H_T$ & $H_T$ \\
                    & $\{1,4\}$ & $\{2,3,7,8\}$         & $1+H_T$ & $1-H_T$ & $1-H_T$ & $0$    & $0$    & $1-H_T$ & $2H_T$ \\
            \hline
            $(2,5)$ & $\{1,2\}$ & $\{3,4,5,6,7\}$       & $1+H_T$ & $1-H_T$ & $1+H_T$ & $0$    & $2H_T$ & $1-H_T$ & $0$ \\
                    & $\{1,2\}$ & $\{3,4,5,7,8\}$       & $1+H_T$ & $1-H_T$ & $1$     & $0$    & $H_T$  & $1-H_T$ & $H_T$ \\
            \hline
            $(2,6)$ & $\{1,2\}$ & $\{3,4,5,6,7,8\}$     & $1+H_T$ & $1-H_T$ & $1+H_T$ & $0$    & $2H_T$ & $1-H_T$ & $0$ \\
                    & $\{1,2\}$ & $\{4,5,6,7,8,9\}$     & $1+H_T$ & $1-H_T$ & $1$     & $0$    & $H_T$  & $1-H_T$ & $H_T$ \\
            \hline
            $(2,7)$ & $\{1,2\}$ & $\{3,4,5,6,7,8,9\}$   & $1+H_T$ & $1-H_T$ & $1+H_T$ & $0$    & $2H_T$ & $1-H_T$ & $0$ \\
            \hline
            $(3,3)$ & $\{1,2,3\}$ & $\{4,5,6\}$         & $1$     & $1$     & $1$     & $0$    & $0$    & $1$     & $0$ \\
                    & $\{1,2,3\}$ & $\{4,5,7\}$         & $1+H_T$ & $1$     & $1-H_T$ & $H_T$  & $0$    & $1-H_T$ & $H_T$ \\
                    & $\{1,2,4\}$ & $\{3,5,6\}$         & $1$     & $1-H_T$ & $1-H_T$ & $0$    & $0$    & $1-H_T$ & $H_T$ \\
                    & $\{1,2,4\}$ & $\{3,5,7\}$         & $1+H_T$ & $1-H_T$ & $1$     & $0$    & $H_T$  & $1-H_T$ & $H_T$ \\
                    & $\{1,2,4\}$ & $\{3,7,8\}$         & $1+H_T$ & $1-H_T$ & $1-H_T$ & $0$    & $0$    & $1-H_T$ & $2H_T$ \\
            \hline
            $(3,4)$ & $\{1,2,3\}$ & $\{4,5,6,7\}$       & $1+H_T$ & $1$     & $1$     & $0$    & $0$    & $1$     & $H_T$ \\
                    & $\{1,2,3\}$ & $\{4,5,7,8\}$       & $1+H_T$ & $1$     & $1-H_T$ & $H_T$  & $0$    & $1-H_T$ & $H_T$ \\
                    & $\{1,2,4\}$ & $\{3,5,6,7\}$       & $1+H_T$ & $1-H_T$ & $1$     & $0$    & $H_T$  & $1-H_T$ & $H_T$ \\
                    & $\{1,2,4\}$ & $\{5,6,7,8\}$       & $1+H_T$ & $1-H_T$ & $1-H_T$ & $0$    & $0$    & $1-H_T$ & $2H_T$ \\
            \hline
            $(3,5)$ & $\{1,2,3\}$ & $\{4,5,6,7,8\}$     & $1+H_T$ & $1$     & $1$     & $0$    & $0$    & $1$     & $H_T$ \\
                    & $\{1,2,4\}$ & $\{3,5,6,7,8\}$     & $1+H_T$ & $1-H_T$ & $1$     & $0$    & $H_T$  & $1-H_T$ & $H_T$ \\
                    & $\{1,2,4\}$ & $\{3,5,7,8,9\}$     & $1+H_T$ & $1-H_T$ & $1+H_T$ & $0$    & $2H_T$ & $1-H_T$ & $0$ \\
            \hline
            $(3,6)$ & $\{1,2,3\}$ & $\{4,5,6,7,8,9\}$   & $1+H_T$ & $1$     & $1$     & $0$    & $0$    & $1$     & $H_T$ \\
                    & $\{1,2,4\}$ & $\{3,5,6,7,8,9\}$   & $1+H_T$ & $1-H_T$ & $1+H_T$ & $0$    & $2H_T$ & $1-H_T$ & $0$ \\
            \hline
            $(4,4)$ & $\{1,2,3,4\}$ & $\{5,6,7,8\}$     & $1+H_T$ & $1$     & $1-H_T$ & $H_T$  & $0$    & $1-H_T$ & $H_T$ \\
                    & $\{1,2,3,4\}$ & $\{5,7,8,9\}$     & $1+H_T$ & $1$     & $1$     & $0$    & $0$    & $1$     & $H_T$ \\
                    & $\{1,2,4,5\}$ & $\{3,6,7,8\}$     & $1+H_T$ & $1-H_T$ & $1$     & $0$    & $H_T$  & $1-H_T$ & $H_T$ \\
            \hline
            $(4,5)$ & $\{1,2,3,4\}$ & $\{5,6,7,8,9\}$   & $1+H_T$ & $1$     & $1$     & $0$    & $0$    & $1$     & $H_T$ \\
                    & $\{1,2,4,5\}$ & $\{3,6,7,8,9\}$   & $1+H_T$ & $1-H_T$ & $1+H_T$ & $0$    & $2H_T$ & $1-H_T$ & $0$ \\
        \end{tabular}
    \end{ruledtabular}
\end{table*}

\subsection{Comparison to Random Codes}
Random QEC codes consist of Haar-random isometries mapping the logical qubit to $n$ physical qubits.
When compared to random codes mapping to the same number of physical qubits, exact codes are ``extremal'' in the sense that they exhibit either maximal or minimal values of unique, redundant, and synergistic capacities.
In particular, for any qubit partition random codes always have more redundant information than exact codes.
Additionally, while exact codes can never have two disjoint subsets both with unique information, similarly sized disjoint subsets for random codes tend to both have small amounts of unique information.
Distributions of the C-PID quantities for Haar-random five-qubit codes are depicted in Figure~\ref{fig:RandCodeDists5}, with the corresponding values for the exact five-qubit code plotted for comparison.

\begin{figure*}
    \includegraphics[scale=0.6]{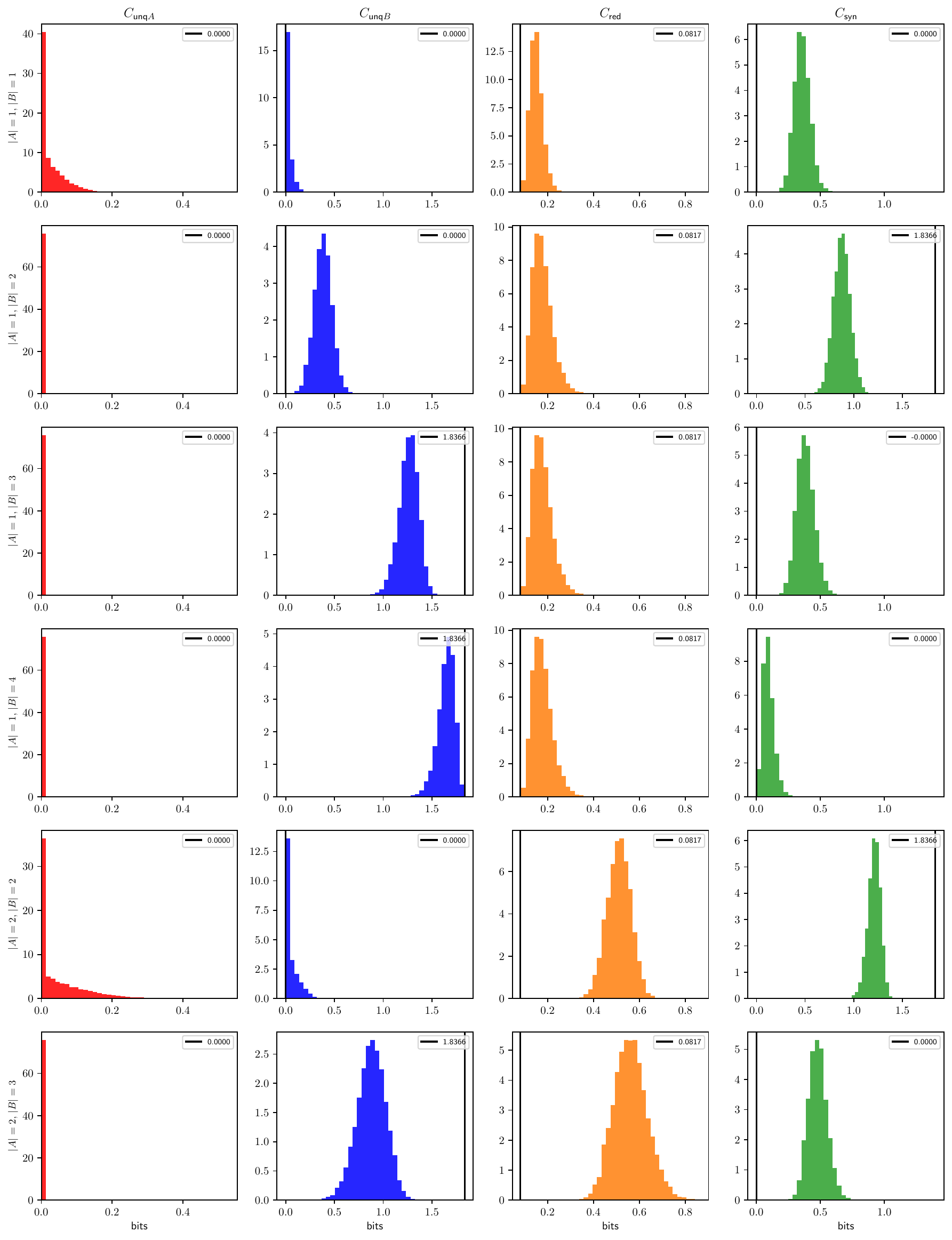}
    \caption{
        \label{fig:RandCodeDists5}
        Distributions of $C_{\text{unq}}$, $C_{\text{red}}$, and $C_{\text{syn}}$ for 10000 random $[[5,1]]$ encodings for the state given in~\eqref{eq:psi_0} with $\lambda = 1/3$.
        The corresponding values of the four C-PID quantities for the same partitions of the exact five-qubit code are displayed as vertical lines.
    }
\end{figure*}

Subsets of sufficiently small size have little or no unique information, particularly with respect to larger subsets.
This is consistent with the observation in~\cite{ma2025haarrandomcodesattain} that random codes approximate ``good'' codes well.
Note that subsets tend to only achieve a significant amount of unique information when the number of qubits is on the order of $n/2$ or greater.
This is in fact consistent with the results of the Hayden and Preskill thought experiment~\cite{Hayden_2007} in which an arbitrary quantum state thrown into a black hole can be recovered (with high fidelity) once roughly half of the Hawking-radiated qubits have been collected.

\section{\label{sec:Decoherence} Application: Decoherence Model}
In this Section we apply our PID to the type of decoherence model that has been used~\cite{Touil_2022} to investigate under what conditions quantum-mechanical systems display classical features.
In this context, one specific physical model that Riedel and Zurek~\cite{riedel2010} analyzed is that of photons interacting (under everyday circumstances) with some quantum system, with the goal of estimating decoherence rates and how many times the same information is redundantly encoded in the photons. 
Here we extend that model to two different types of particles interacting with the system, say photons and nitrogen molecules in the air, to mention one specific everyday example. 
When the two types of particles interact in different ways with the system one really does need a definition for redundant information in order to be able to distinguish it from non-redundant information. 
We will first discuss our definition in the case of just one type of interaction to see whether our results agree with previous results, and then move on to discuss our findings for the case of two different interactions.

\subsection{The Model}
In the decoherence model, the system $\mathcal{S}$ is represented by a qubit initially in the state $\ket{\psi_\mathcal{S}} = \sqrt{\lambda}\ket{0} + \sqrt{1-\lambda}\ket{1}$, while the environment $\mathcal{E}$ is modeled by $N \gg 1$ non-interacting qubits initially in the state $\ket{0}$.
The environment qubits then interact with the system through an imperfect C-NOT gate,
\begin{equation} \label{eq:c-maybe}
    CX_\alpha = \dyad{0}\otimes\mathbbm{1} + \dyad{1}\otimes X_\alpha,
\end{equation}
with
\begin{equation} \label{eq:x-alpha}
    X_\alpha = \mqty(\sin\alpha & \cos\alpha \\ \cos\alpha & -\sin\alpha)
\end{equation}
where $\alpha$ parametrizes the imperfection of the gate.

After interacting, the joint state of the system and environment is the entangled state
\begin{equation}
    \ket{\psi_{\mathcal{SE}}} = \sqrt{\lambda}\ket{0}\ket{0}^{\otimes N} + \sqrt{1-\lambda}\ket{1}\ket{\tilde{1}}^{\otimes N},
\end{equation}
where
\begin{equation}
    \ket{\tilde{1}} = \sin\alpha\ket{0} + \cos\alpha\ket{1}.
\end{equation}
A key feature of this model is the emergence of what the authors refer to as the ``classical plateau'' (Figure~\ref{fig:ZurekPlateau}) of the mutual information between the system and an environment fragment as a function of fragment size $m$.
The level of this plateau is determined by the entropy of the system and its presence indicates that classical information about the system is completely redundantly encoded into the environment.

\begin{figure}
    \includegraphics[width=\columnwidth]{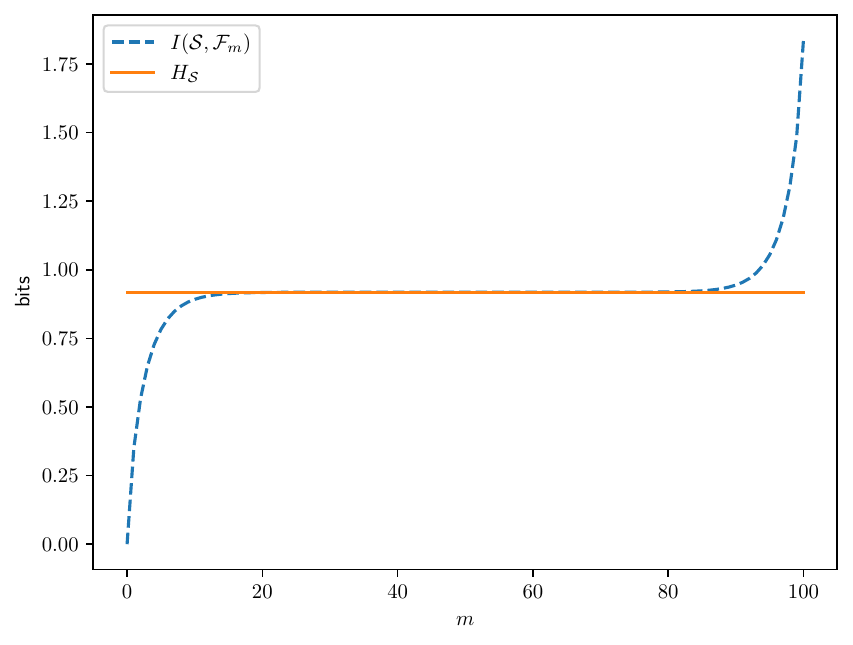}
    \caption{
        \label{fig:ZurekPlateau} 
        Mutual information between system and environment fragment as a function of fragment size with $\lambda=1/3$, $\sin\alpha = 0.85$, and $N = 100$.
        For moderately sized fragments ($m$ around 15) the fragment/system mutual information saturates at the system entropy.
        The fragment size needed to reach the plateau is determined by the imperfection parameter $\alpha$.
    }
\end{figure}

\subsection{Redundant Information in the Decoherence Model}
To apply the C-PID to the decoherence model, the system qubit $\mathcal{S}$ is identified as the target, and we consider Alice as possessing a fragment $\mathcal{F}_m$ of the environment consisting of $m$ qubits.
Bob then possesses the remainder of the environment qubits.

The collective and individual superdense capacities for the decoherence model are plotted in Figure~\ref{fig:OriginalZurekCapacities}.
Recall that the individual superdense capacity $C_A$ is directly related to the target-Alice mutual information by
\begin{equation}
    C_A = I(T;A) + (1 - H_T),
\end{equation}
so the value of the individual superdense capacity on the plateau is 1 bit, regardless of the target entropy.

The PID for the decoherence model is plotted in Figure~\ref{fig:OriginalZurekCPID}.
\begin{figure*}
    \subfloat[\label{fig:OriginalZurekCapacities}]{%
        \includegraphics[width=\columnwidth]{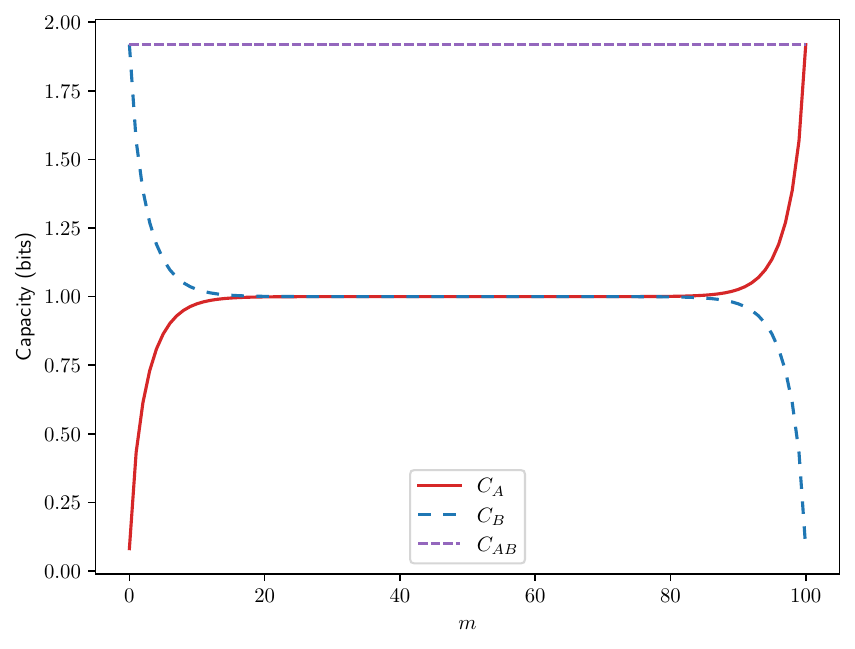}%
    }
    \hspace*{\fill}%
    \subfloat[\label{fig:OriginalZurekCPID}]{%
        \includegraphics[width=\columnwidth]{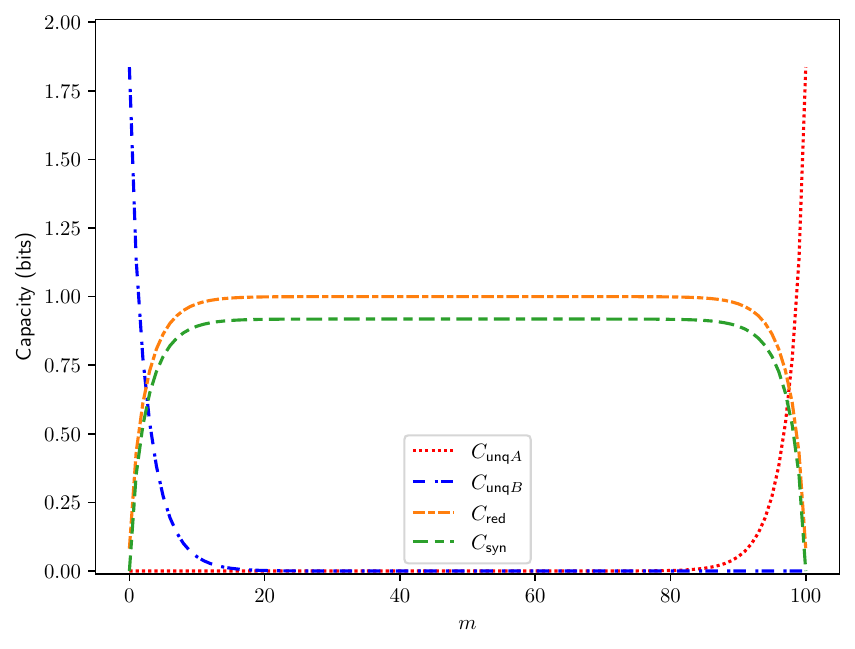}%
    }
    \caption{
        a) Collective and individual superdense capacities for the decoherence model as a function of the number of environment qubits in Alice's possession, with $\lambda=1/3$, $\sin\alpha = 0.85$, and $N=100$ environment qubits total.
        The collective two-bit capacity depends only on the state of the target and not how the environment qubits are partitioned between Alice and Bob. It is maximized for $\lambda = 1/2$.
        As Alice and Bob interact with the target in the same way, their individual two-bit capacities are symmetric under $A \leftrightarrow B$ and $m \leftrightarrow N-m$.   
        b) C-PID for the original decoherence model two-bit capacity.
        When either party possesses the vast majority of the environment qubits, their information is almost entirely unique.
        On the classical plateau, Alice's and Bob's individual capacities of one bit are each entirely redundant.
        Synergistic capacity then accounts for the remainder of the collective two-bit capacity.
    }
\end{figure*}
Notably, the value of the redundant capacity on the plateau is 1 bit, confirming that the information contained in Alice's fragment of the environment is completely redundant.
The unique capacities for Alice and Bob are both zero on the plateau, consistent with the fact that all environment qubits interact with the target identically.
Only when Alice's (or Bob's) fragment consists of the vast majority of the environment is there unique information.
The synergistic capacity on the plateau then makes up the remainder of the difference between the collective capacity and the (redundant) individual capacity.

\subsection{Non-identical Interactions}
The decoherence model can be generalized to allow for Alice and Bob's environment fragments to interact with the target in non-identical ways.
For instance, while the original model can be thought of as describing photons interacting with the target system, it should be the case that molecules in the air experience a different interaction with the system.
To that end, we define a generalized imperfect C-NOT gate for Alice as
\begin{equation}
    CX_\alpha^{(A)} = \dyad{+_{A}}\otimes\mathbbm{1} + \dyad{-_{A}}\otimes X_\alpha,
\end{equation}
where 
\begin{eqnarray}
    \ket{+_{A}} &=&\cos\frac{\theta_{A}}{2}\ket{0} + e^{i\phi_{A}}\sin\frac{\theta_{A}}{2}\ket{1} \nonumber \\
    \ket{-_{A}} &=&-\sin\frac{\theta_{A}}{2}\ket{0} + e^{i\phi_{A}}\cos\frac{\theta_{A}}{2}\ket{1},
\end{eqnarray}
and $X_\alpha$ is still defined as in (\ref{eq:x-alpha}).
Bob's qubits then interact with the target with an analogous interaction determined by $(\theta_B, \phi_B)$, and the same $\alpha$.
We recover the original model by setting $\theta_A = \theta_B = 2\cos^{-1}\sqrt{\lambda}$ and $\phi_A = \phi_B = 0$.
Without loss of generality, however, we can fix the target qubit to start in the state $\ket{0}$ (i.e.,\ $\lambda=1$).

Alice and Bob's interactions with the target in general will no longer commute, so the order in which they occur matters.
For the purpose of computational tractability, we focus on the extreme case in which all of Alice's qubits interact with the target first.

The superdense capacities for $\theta_A = \pi/4$, $\theta_B = \pi/2$ and $\phi_A = \phi_B = 0$ are depicted in Figure~\ref{fig:GeneralizedZurekCapacities}.

Because Bob's interaction is more effective at encoding information about the target (with the initial state of the target being $\ket{0}$, control qubit states for the C-NOT which lie on the equator of the Bloch sphere optimize fragment/system mutual information), Bob achieves greater superdense capacity on the plateau.
This is also a consequence of Bob's interaction occurring after Alice's, which effectively disturbs the information about the target encoded in Alice's qubits by her interaction.
In fact, due to this order asymmetry, it is impossible for Alice to obtain unique information on the plateau.

With Bob having a greater superdense capacity than Alice on the plateau, it is expected that some of his information about the target should be unique.
Indeed, as depicted in Figure~\ref{fig:GeneralizedZurekCPID}, Bob maintains unique capacity throughout the plateau.
\begin{figure*}
    \subfloat[\label{fig:GeneralizedZurekCapacities}]{%
        \includegraphics[width=\columnwidth]{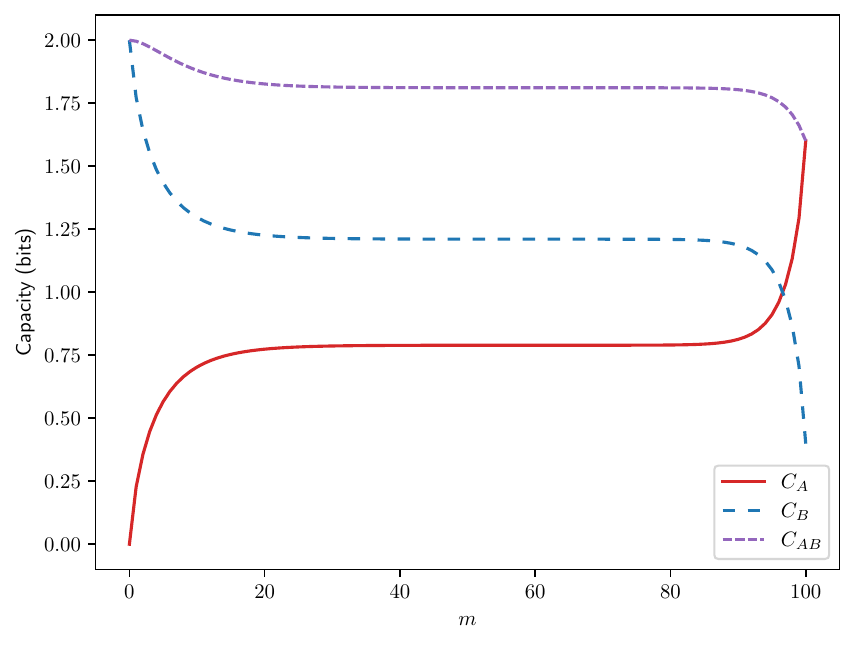}%
    }
    \hspace*{\fill}%
    \subfloat[\label{fig:GeneralizedZurekCPID}]{%
        \includegraphics[width=\columnwidth]{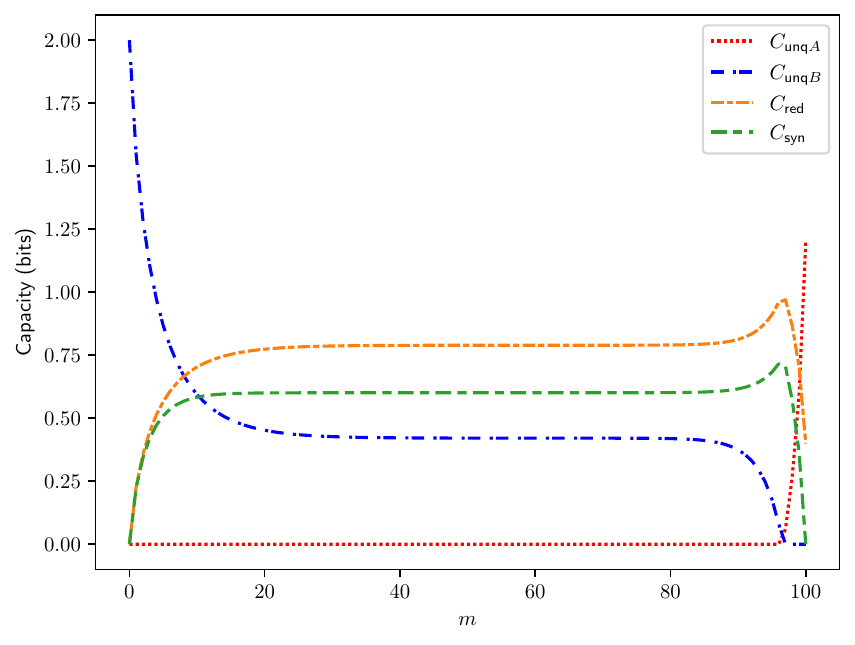}%
    }
    \caption{
        a) Collective and individual superdense capacities for the generalized decoherence model with $\theta_A = \pi/4$, $\theta_B = \pi/2$, $\phi_A = \phi_B = 0$, $\sin\alpha = 0.85$ and $N=100$.
        The classical plateau is still evident in the individual two-bit capacities, but their values are now different for Alice and Bob.
        That Bob's individual capacity on the plateau is greater than Alice's is an unavoidable effect of modeling his interaction as occurring after Alice's.
        Because of the difference in Alice and Bob's interactions with the target, the collective two-bit capacity now depends on the partitioning of the environment qubits. It is maximized when Bob possesses the entirety of the environment qubits, as Bob's interaction is more ``optimal''.
        b) C-PID for the generalized decoherence model.
        Note that Bob now maintains unique information throughout the plateau.
        All of Alice's (and approximately half of Bob's) information is redundant.
        Thus, even with non-identical interactions, information about the target is nonetheless redundantly encoded into the environment consistent with the emergence of classical reality as argued by quantum Darwinists.
    }
\end{figure*}
However, Bob only achieves nonzero unique information on the plateau when his interaction is stronger than Alice's (i.e.,\ $\abs{\theta_B - \pi/2} < \abs{\theta_A - \pi/2}$).
Most notably, despite Alice and Bob interacting with the target in different ways, the largest component of the joint capacity is still redundant.
This further supports the quantum Darwinists' supposition that redundant encoding of information is necessary for the emergence of classical reality from the quantum world.

\section{\label{sec:Summary} Summary and Conclusions}

We generalized the classical notions of unique, redundant, and synergistic information to quantum information. 
We based our definitions on superdense coding. This choice is motivated by two different considerations. 
First, the superdense coding capacity is directly related to the more typically considered mutual information via $C = I + (1-H_T)$, where $H_T$ is the entropy of the reduced state of the target. 
Second, there is an exclusion principle for superdense coding, which states that in a multi-party setting at most one party can possess a quantum advantage (i.e.,\ do better than what would be possible classically). 
This suggests the presence of unique quantum information for the party with the quantum advantage. 
In a PID, defining one of the three concepts suffices to fix the remaining two concepts, and so our definition starts with defining unique quantum superdense coding capacity.

With our definition of a quantum PID in place, we applied it to a handful of different quantum error correction codes. 
More precisely, for the entangled state $\sqrt{\lambda} \ket{0}_T \ket{0}_L +\sqrt{1-\lambda}\ket{1}_T \ket{1}_L$ we considered the information about the target qubit $T$ contained in arbitrary subsets of encoding qubits comprising the logical qubit $L$.
One important feature of this example is that the superdense capacity between $T$ and any subset can take on just 3 different values, $1$ or $1\pm H_T$.

We considered the three-qubit bit-flip code and confirmed that it is classical in a precise sense: any strict subset of the encoding qubits has just one (redundant) bit of information.
The five-qubit code is a perfect code. 
This, too, manifests itself directly in our quantum PID in the combination of two facts: (i) that no subset of encoding qubits ever contains just one (``classical'') redundant bit of capacity, and (ii) that any subset containing the majority of encoding qubits has a quantum advantage and possesses the full $1 + H_T$ bits of capacity. 
The complement of such a subset is correctable and contains only the $1-H_T$ bits of redundant capacity shared by all subsets of the physical qubits.

The seven-qubit Steane code is not perfect and this shows itself in the fact that there are some subsets of 4 encoding qubits that do not contain the full $1+H_T$ bits of superdense capacity. 
On the other hand, there is also no subset with just the one bit of classical capacity, but that is because the Steane code is self-dual.

Shor's nine-qubit code contains quantum aspects but also, being a concatenated version of the three-qubit code, classical aspects in that some subsets of 3 qubits contain just the one classical bit of information about the target qubit. 
This also shows it is not a perfect code.

By studying all these examples we further find (i) erasure-correctable subsets of qubits always have zero unique information, and (ii) that synergistic information between two subsets of encoding qubits is nonzero if and only if there is a logical operation that is supported on the union of the two subsets, but not on the subsets individually.

We also studied the PID for random codes. 
The numerical results show how a little bit of unique information does now exist for subsets of qubits that would be erasure-correctable in an exact quantum code and, similarly, how a little bit of nonzero synergistic information exists where good codes contain none. 
In other words, good error correction codes are special degenerate cases of random codes.

We similarly find that random codes always contain more redundant information than good codes, which confirms that the colloquial way of talking about redundant information in the context of error correction really uses a different notion of ``redundant.'' 
The colloquial meaning is that information is spread out over multiple qubits, but our meaning is more precise and limited: that the same information can be obtained from two disjoint subsets.

Our second application concerns decoherence models, especially as used to illuminate the emergence of the classical world out of a microscopic quantum world. One idea (sometimes called quantum Darwinism) is that it is redundant information that characterizes the classical world. 
If many different parts of the environment possess the same information about a quantum system, then that system has become effectively classical. 
The models to argue for this interpretation so far have been symmetric between the different parts of the environment. 
We modified a simple model and made it asymmetric on purpose to see if (most of) the information in the environment is still redundant. 
This is indeed so, and in this way we confirm the basic premise of quantum Darwinism.

In short, our quantum PID provides additional insights about both quantum error correction codes and decoherence models and conversely, these applications shed light on the idea of a PID.

Finally, we focused here on proposing and testing one definition for a quantum PID. In all our cases the target system was just a qubit. 
An obvious open question is how to extend our definition and calculations to more general cases.

\appendix
\section{\label{sec:Non-Negativity} Non-Negativity of the C-PID}
It is well-known that quantum versions of some information-theoretic quantities, such as conditional entropy, can fail to be non-negative when generalized to the quantum setting.
Such failure of non-negativity can be interpreted as indicating aspects of quantum information that are not present in classical information, such as entanglement.
Our C-PID, however, decomposes the \textit{classical} information capacity of a quantum state.
Non-negativity in this case is desirable, as a negative classical information capacity would have an unclear interpretation.
Hence we prove the following theorem:
\begin{theorem}[C-PID Non-Negativity] \label{thm:C-PID_non-negativity}
    For any tripartite $\rho_{TAB}$, all C-PID elements ($C_{\text{unq}A}$, $C_{\text{unq}B}$, $C_\text{red}$, $C_\text{syn}$) are non-negative.
\end{theorem}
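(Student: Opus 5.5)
The plan is to avoid working with the nested definitions \eqref{eq:C_unqA_def}--\eqref{eq:C_syn_def} directly and instead use the case decomposition of Table~\ref{tab:CPID}. Since the four cases partition the real line of values of $C_A-C_B$, it suffices to check non-negativity of each entry in each row. Cases 3 and 4 are the images of cases 2 and 1 under $A\leftrightarrow B$, so only cases 1 and 2 need to be treated. Non-negativity of $C_{\text{unq}A}$ and $C_{\text{unq}B}$ is immediate from \eqref{eq:C_unqA_def}--\eqref{eq:C_unqB_def} (or from the table rows, using each row's defining inequality, e.g.\ $C_A-C_B+\Gamma_B>0$ in case 2). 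The work is therefore in $C_\text{red}$ and $C_\text{syn}$.

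First I collect three elementary inequalities.

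(i) Every capacity is a Holevo quantity $H(\Phi(\rho))-H(\rho)$ of a unital mixed-unitary channel acting on $T$. Such a channel cannot decrease entropy, so $C_A,C_B,C_{AB}\ge 0$ and all one-bit capacities are non-negative. Phase averaging preserves this.

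(ii) Monotonicity under partial trace gives $C_{AB}\ge C_A$ and $C_{AB}\ge C_B$, and likewise for the one-bit capacities. Concretely, $\chi(\rho_{TAB};\Phi)-\chi(\rho_{TB};\Phi)$ is a difference of conditional entropies $H(A|TB)_{\Phi(\rho)}-H(A|TB)_\rho$. This is non-negative by strong subadditivity (equivalently, data processing for the relative entropy $D(\rho\|\Phi(\rho))$ under $\mathrm{Tr}_A$).

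(iii) A chain-rule bound. The completely depolarizing channel factors as $\Phi_X\circ\Phi_Z$, so
\begin{equation}
C_{AB}=\left[H(\Phi_Z\rho)-H(\rho)\right]+\left[H(\Phi_X\Phi_Z\rho)-H(\Phi_Z\rho)\right],
\end{equation}
and similarly for $C_B$. I intend to use this to prove the key lemma
\begin{equation}
C_{AB}-C_A\ \ge\ \Gamma_B ,\qquad C_B\ \ge\ \Gamma_B ,
\end{equation}
together with the mirror statements. The intended route is to compare the two bracketed terms for $AB$ versus $A$ (respectively for $B$ versus $A$) separately. The first bracket is the parallel capacity $Z$. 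The second bracket, after averaging over the phase ambiguity of Section~\ref{sec:capacity_oneBit}, should be controlled by the transverse capacity $\mathcal{C}^\perp$. Each positive part in $\Gamma_B$ would then be absorbed by one bracket, with monotonicity (ii) handling the terms where $A$'s contribution dominates.

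With these inequalities each table entry follows.

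Case 1: $C_\text{red}=C_A\ge 0$ by (i), and $C_\text{syn}=C_{AB}-C_B\ge0$ by (ii).

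Case 2: $C_\text{red}=C_B-\Gamma_B\ge0$, and $C_\text{syn}=C_{AB}-C_A-\Gamma_B\ge0$, both by the key lemma.

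Cases 3 and 4 follow by symmetry. As a consistency check, on pure and quasi-classical states the lemma should reduce to the identities of Appendices~\ref{sec:PureStates} and~\ref{sec:ClassicalStates}.

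The main obstacle is the key lemma, and specifically its transverse half. The second chain-rule term is evaluated on the already $Z$-dephased state $\Phi_Z\rho$, not on $\rho$ itself, so it is not literally the transverse capacity. I would first try to show that $Z$-dephasing followed by phase averaging commutes with the $X$-dephasing, so that the phase-averaged $X$ capacity of $\Phi_Z\rho$ can be related to $\mathcal{C}^\perp$ of $\rho$ through a further strong-subadditivity inequality.

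If this inequality runs the wrong way, I would fall back on the case condition itself. In case 2 the condition $C_A-C_B>-\Gamma_B$ supplies slack that can be combined with $C_{AB}\ge C_B$, giving $C_{AB}-C_A-\Gamma_B\ge (C_B-C_A)-\Gamma_B$. One then only needs a bound showing this last quantity is not too negative, which is where the one-bit/two-bit relation would still have to enter.
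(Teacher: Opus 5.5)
\textbf{There is a genuine gap: your key lemma is false.} Your case skeleton and inequalities (i), (ii) are fine, and they match the paper's handling of cases 1 and 4. The paper then treats case 3 where you treat case 2. The problem is the key lemma, $C_B\ge\Gamma_B$ and $C_{AB}-C_A\ge\Gamma_B$ unconditionally, so no chain-rule argument can prove it.

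When both positive parts of $\Gamma_B$ are active, you would need $C_B\ge\mathcal{C}_B^\parallel+\mathcal{C}_B^\perp-\mathcal{C}_A^\parallel-\mathcal{C}_A^\perp$. But the two one-bit capacities can sum to more than the superdense capacity. Take $\rho_{TAB}=\rho^{\mathrm{W}}_{TB}\otimes\dyad{0}_A$ with the Werner state $\rho^{\mathrm{W}}_{TB}=\tfrac12\dyad{\Phi^+}+\tfrac18\Id$.
\begin{itemize}
\item Its spectrum is $(\tfrac58,\tfrac18,\tfrac18,\tfrac18)$.
\item Dephasing $T$ along any axis gives spectrum $(\tfrac38,\tfrac38,\tfrac18,\tfrac18)$; the $U\otimes U^*$ symmetry makes the axis irrelevant.
\item Hence $C_B=C_{AB}-C_A\approx0.451$, while $\mathcal{C}_B^\parallel=\mathcal{C}_B^\perp\approx0.262$.
\item Since $\rho_T=\Id/2$ and $A$ is in a product pure state, all of Alice's one-bit capacities vanish. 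So $\Gamma_A=0$ and $\Gamma_B\approx0.525>C_B$.
\end{itemize}
In your chain-rule language, the $X$ capacity of the $Z$-dephased state ($\approx0.189$) is strictly smaller than the transverse capacity of $\rho$ itself. That is exactly the step you flagged, and it fails outright. This state lies in case 3, where $C_\text{red}=C_\text{syn}=0$, so the theorem survives but your lemma does not.

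Your fallback does not help either. The lower half of the case-2 condition, $C_A-C_B>-\Gamma_B$, says precisely that $(C_B-C_A)-\Gamma_B<0$. So the bound you derive from it is negative.

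\textbf{The missing idea is to use the upper half of the case condition to restrict the form of $\Gamma_B$.} If both positive parts of $\Gamma_B$ are nonzero, then both positive parts of $\Gamma_A$ vanish, so $\Gamma_A=0$. Case 2 would then require $-\Gamma_B<C_A-C_B\le-\Gamma_B$, which is impossible. With non-strict boundaries it instead forces $C_A-C_B=-\Gamma_B$, giving $C_\text{red}=C_A$ and $C_\text{syn}=C_{AB}-C_B$.

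So in case 2 there are only two possibilities.
\begin{itemize}
\item $\Gamma_B=0$: then $C_\text{red}=C_B$ and $C_\text{syn}=C_{AB}-C_A$, which your (i) and (ii) handle.
\item $\Gamma_B=\mathcal{C}_B^\star-\mathcal{C}_A^\star$ for a single $\star\in\{\parallel,\perp\}$: then only single-term inequalities are needed.
  \begin{itemize}
  \item $C_\text{red}=(C_B-\mathcal{C}_B^\star)+\mathcal{C}_A^\star\ge0$, using $C_B\ge\mathcal{C}_B^\star$. This follows from $\Phi_C=\Phi_{i'}\circ\Phi_i$ and unitality of $\Phi_{i'}$.
  \item $C_\text{syn}=(C_{AB}-C_A)-(\mathcal{C}_B^\star-\mathcal{C}_A^\star)\ge0$. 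For each single dephasing axis $i$, strong subadditivity gives $H^i_{TAB}\le H^i_{TA}+H_{AB}-H_A$, since the encoding does not touch $A$ or $AB$. This is $C_{AB}-i_{AB}\ge C_A-i_A$, which combined with $i_{AB}\ge i_B$ gives $C_{AB}-C_A\ge i_B-i_A$.
  \end{itemize}
\end{itemize}
Both inequalities hold axis by axis, so they survive the phase average that defines $\mathcal{C}^\perp$. This is the paper's argument: Lemmas 1--3 plus the three sub-cases of case 3. It never needs to compare $\Phi_Z\rho$ with $\rho$.
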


The proof of this theorem requires the following Lemmas:
\begin{Lemma} \label{lem:2_gte_1}
    The superdense capacity is greater than or equal to all one-bit encodings for any subsystem, i.e.,\
    \begin{equation}
        i_S \leq C_S \quad \forall i\in \{X,Y,Z\},\; S \in \{A, B\}.
    \end{equation}
    Or, in terms of the parallel and transverse capacities,
    \begin{equation}
        \mathcal{C}_S^\parallel, \mathcal{C}_S^\perp \leq C_S.
    \end{equation}
\end{Lemma}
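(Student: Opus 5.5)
Both capacities are Holevo quantities of the form $H(\Phi(\rho_{TS})) - H(\rho_{TS})$, with $S\in\{A,B\}$ and the irrelevant subsystem traced out. The only difference is the channel: the superdense capacity uses the completely depolarizing channel $\Phi_{\text{dep}}$ (uniform mixture of $\{\Id,\sigma_x,\sigma_y,\sigma_z\}$ on $T$), while $i_S$ uses the dephasing channel $\Phi_i$ (uniform mixture of $\{\Id,\sigma_i\}$). It therefore suffices to show
\[
H(\Phi_{\text{dep}}(\rho_{TS})) \geq H(\Phi_i(\rho_{TS})) \quad \text{for each } i.
\]

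The key observation is that the depolarizing channel factors through the dephasing channel:
\[
\Phi_{\text{dep}} = \tfrac{1}{2}\left(\Phi_i + \sigma_j^{(T)}\Phi_i(\cdot)\sigma_j^{(T)}\right)
\]
for any $j\neq i$. This holds because $\{\Id,\sigma_i\}\cdot\{\Id,\sigma_j\}$ produces all four Paulis up to phases, and the phases cancel under conjugation. Equivalently, $\Phi_{\text{dep}} = \Phi_j\circ\Phi_i$. Writing $\Phi_{\text{dep}}(\rho)$ as an equal mixture of $\Phi_i(\rho)$ and a unitary conjugate of it, concavity of the von Neumann entropy gives
\[
H(\Phi_{\text{dep}}(\rho)) \geq \tfrac12 H(\Phi_i(\rho)) + \tfrac12 H\!\left(\sigma_j\Phi_i(\rho)\sigma_j\right) = H(\Phi_i(\rho)),
\]
using unitary invariance for the second term. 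Subtracting $H(\rho_{TS})$ from both sides yields $i_S\le C_S$.

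It remains to check that the paper's $C_S$ is exactly this depolarizing Holevo quantity. The paper states that for a qubit target the optimum is achieved by the completely depolarizing channel. Alternatively, one can verify directly that $H(\Id/2\otimes\rho_S)=1+H(\rho_S)$, which reproduces $C_S=\log 2+H(\rho_S)-H(\rho_{TS})$.

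For the parallel and transverse versions, $\mathcal{C}_S^\parallel=Z_S\le C_S$ is immediate. For $\mathcal{C}_S^\perp$, note that $C_S$ is invariant under the phase rotation $e^{i\phi\sigma_z/2}$ on $T$, because it is a local unitary and $C_S$ depends only on entropies of $\rho_S$ and $\rho_{TS}$. So $X_S(\phi)\le C_S$ holds for every $\phi$, and averaging over $\phi$ gives $\mathcal{C}_S^\perp\le C_S$.

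The only step needing care is the channel factorization identity and making sure the Pauli phases drop out. This is a short Kraus-operator check, so I expect no real obstacle.
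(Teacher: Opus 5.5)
Your proof is correct and follows the paper's route: the paper also factors the completely depolarizing channel as $\Phi_{i'}\circ\Phi_i$ and uses that the outer unital channel cannot decrease entropy, and your concavity-plus-unitary-invariance step proves exactly that for this mixed-unitary case. Your explicit treatment of the phase-averaged transverse capacity, using the local-unitary invariance of $C_S$, fills in a detail the paper leaves implicit.
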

\begin{proof}
    The completely depolarizing channel decomposes as
    \begin{equation}
        \Phi_C = \Phi_{i'} \circ \Phi_i
    \end{equation}
    for any Pauli direction $i'$ orthogonal to $i$ (e.g.,\ $\Phi_X \circ \Phi_Z = \Phi_C$). 
    Since $\Phi_{i'}$ is unital, the entropy is non-decreasing under it:
    \begin{equation}
        H(\Phi_C(\rho_{TS})) = H(\Phi_{i'}(\Phi_i(\rho_{TS}))) \ge H(\Phi_i(\rho_{TS})).
    \end{equation}
    Hence,
    \begin{eqnarray}
        C_S &=& H(\Phi_C(\rho_{TS})) - H(\rho_{TS}) \nonumber \\
        &\geq& H(\Phi_i(\rho_{TS})) - H(\rho_{TS}) = i_S
    \end{eqnarray}
\end{proof}

For notational convenience, denote the various encoded states by
\begin{eqnarray} \label{eq:encoded_states}
    \rho_{TS}^i &=& \Phi_i(\rho_{TS}) \quad i \in\{X,Y,Z\} \\
    \rho_{TS}^C &=& \Phi_C(\rho_{TS}),
\end{eqnarray}
and their corresponding entropies by
\begin{eqnarray}
    H_{TS}^i &=& H(\rho_{TS}^i), \\
    H_{TS}^C &=& H(\rho_{TS}^C),
\end{eqnarray}
for $S \subseteq \{A,B\}$.
Note that, as the encoding channels act only on $T$, they have no effect on reduced states with the target traced out, i.e.,\
\begin{equation} \label{eq:marginal_equality}
    \rho_S^C = \rho_S^i = \rho_S.
\end{equation}

\begin{Lemma} \label{lem:joint_gte_ind}
    Joint capacities are greater than or equal to individual capacities, i.e.,\
    \begin{equation}
        i_{AB} \geq i_A, i_B.
    \end{equation}
\end{Lemma}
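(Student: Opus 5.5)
We need to show that $i_{AB} \ge i_A$ for each $i \in \{X,Y,Z,C\}$, where $i_{S} = H_{TS}^i - H_{TS}$ and the capacity notation follows the encoded states in~\eqref{eq:encoded_states}. The plan is to recognize the capacity as a relative-entropy quantity and invoke monotonicity under partial trace.

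\textbf{Step 1: rewrite the capacity as a relative entropy.} Each encoding channel $\Phi_i$ acts only on $T$ and is a mixture of conjugations by the unitaries $\{\mathbbm{1}, \sigma_i\}$ (or the four Paulis for $C$) with equal weights. It is therefore a pinching-type, self-adjoint, idempotent unital channel, and the encoded state is the projection of $\rho_{TS}$ onto the commutant of the relevant operators. For such a channel,
\begin{equation}
H(\Phi_i(\rho)) - H(\rho) = D\bigl(\rho \,\|\, \Phi_i(\rho)\bigr),
\end{equation}
where $D$ is the quantum relative entropy.

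To verify this, we need $\operatorname{tr}[\rho \log \Phi_i(\rho)] = \operatorname{tr}[\Phi_i(\rho)\log\Phi_i(\rho)]$. Since $\Phi_i(\rho)$ is invariant under conjugation by each $U_j$, so is $\log \Phi_i(\rho)$. Hence
\begin{equation}
\operatorname{tr}[\rho\log\Phi_i(\rho)] = \operatorname{tr}[\Phi_i(\rho)\log\Phi_i(\rho)],
\end{equation}
using self-adjointness of $\Phi_i$.

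\textbf{Step 2: apply monotonicity under partial trace.} Partial trace over $B$ commutes with $\Phi_i$, since $\Phi_i$ acts only on $T$; this is the observation underlying~\eqref{eq:marginal_equality}, applied now to $TA$. The data-processing inequality for relative entropy then gives
\begin{equation}
i_{AB} = D\bigl(\rho_{TAB} \,\|\, \Phi_i(\rho_{TAB})\bigr) \ge D\bigl(\rho_{TA} \,\|\, \Phi_i(\rho_{TA})\bigr) = i_A,
\end{equation}
and similarly $i_{AB} \ge i_B$.

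\textbf{Alternative route via strong subadditivity.} One could instead write $\Phi_i(\rho_{TS})$ as the reduced state of $\sum_j p_j\,|j\rangle\langle j|_R \otimes U_j\rho_{TS} U_j^\dagger$. Then $i_S = I(R;TS)$ up to the invariant term $H(\rho_{TS})$ (as in the Holevo form~\eqref{eq:superdense_holevo_def}), and monotonicity of mutual information gives $I(R;TAB) \ge I(R;TA)$.

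\textbf{Main obstacle.} The only nontrivial point is justifying the identity in Step 1: that the Holevo quantity for these uniform unitary ensembles equals the relative entropy to the twirled state. I expect the invariance of $\log\Phi_i(\rho)$ under the ensemble unitaries to settle this cleanly. The phase-averaged transverse capacity is then handled by averaging the inequality over $\phi$.
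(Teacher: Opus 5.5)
Your argument is correct, but it is organized differently from the paper's. The paper never rewrites the individual capacities. It subtracts them and telescopes the entropies into $i_{AB}-i_A = H(B|TA)_{\rho^i_{TAB}}-H(B|TA)_{\rho_{TAB}}$. This is non-negative by the data-processing inequality for conditional entropy, since the encoding acts only on $T$, which is part of the conditioning system. You instead identify each capacity on its own as $D\bigl(\rho_{TS}\,\|\,\Phi_i(\rho_{TS})\bigr)$, and then apply monotonicity under $\operatorname{tr}_B$ using $\operatorname{tr}_B\circ\Phi_i=\Phi_i\circ\operatorname{tr}_B$. Both arguments are ultimately data processing for relative entropy, applied to different pairs of states.

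The paper's version assumes less. It holds for the entropy increase $H(\mathcal{N}(\rho))-H(\rho)$ of an arbitrary channel $\mathcal{N}$ on $T$. Your Step 1 genuinely needs $\Phi_i$ to be a uniform twirl over a (projective) group of unitaries, so that $\Phi_i(\rho)$ and $\log\Phi_i(\rho)$ are invariant. For a general unitary mixture one only has $H(\Phi(\rho))-H(\rho)=\sum_j p_j D\bigl(U_j\rho U_j^\dagger\,\|\,\Phi(\rho)\bigr)$. In exchange, your route gives each capacity a standalone meaning as the relative entropy of asymmetry with respect to the encoding group. It also reduces the lemma to the simplest form of data processing, namely discarding a subsystem.

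Your alternative Holevo route is also sound and needs only a unitary ensemble. There $i_S=I(R;TS)$ holds exactly, because $\sum_j p_j H(U_j\rho_{TS}U_j^\dagger)=H(\rho_{TS})$, so no correction term is involved.

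The only technicality worth adding to Step 1 is that the relative entropy is finite. Since $\Phi_i(\rho)\geq p_1\rho$ for the identity element $U_1$, you have $\operatorname{supp}\rho\subseteq\operatorname{supp}\Phi_i(\rho)$.
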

\begin{proof}
    The difference of one-bit capacities $i_{AB} - i_A$ can be written as a difference of conditional entropies:
    \begin{eqnarray*}
        i_{AB} - i_A &=& (H_{TAB}^i-H_{TAB}) - (H_{TA}^i - H_{TA}) \\
        &=& (H_{TAB}^i - H_{TA}^i) - (H_{TAB} - H_{TA}) \\
        &=& H(B|TA)_{\rho_{TAB}^i} - H(B|TA)_{\rho_{TAB}}
    \end{eqnarray*}
    which is greater than or equal to zero by the Data Processing Inequality~\cite{wilde2013}.
    The same argument applies to the superdense capacity as well.
\end{proof}

\begin{Lemma} \label{lem:diff_of_diff}
    Joint/individual superdense capacity differences are greater than or equal to one-bit capacity differences, i.e.,\
    \begin{equation} \label{eq:superdense_diff_vs_one-bit_diff}
        C_{AB} - C_B \geq i_{A} - i_B.
    \end{equation}
    Or, in terms of the parallel and transverse capacities,
    \begin{equation}
        C_{AB} - C_B \geq \mathcal{C}_A^\parallel - \mathcal{C}_B^\parallel, \; \mathcal{C}_A^\perp - \mathcal{C}_B^\perp 
    \end{equation}
\end{Lemma}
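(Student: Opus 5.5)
The plan is to split the inequality into two steps, with the joint one-bit capacity $i_{AB}$ as the intermediate quantity:
\begin{equation*}
    C_{AB} - C_B \;\geq\; i_{AB} - i_B \;\geq\; i_A - i_B .
\end{equation*}
The second inequality is exactly Lemma~\ref{lem:joint_gte_ind} ($i_{AB}\geq i_A$), so all the work is in the first one.

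For the first inequality I will rearrange it as $C_{AB} - i_{AB} \geq C_B - i_B$. In the notation of~\eqref{eq:encoded_states}, the $H_{TS}$ terms cancel, since the unencoded entropy appears in both $C_S$ and $i_S$. This leaves $H^C_{TAB} - H^i_{TAB} \geq H^C_{TB} - H^i_{TB}$. Regrouping the terms gives the conditional-entropy statement
\begin{equation*}
    H(A|TB)_{\rho^C_{TAB}} \;\geq\; H(A|TB)_{\rho^i_{TAB}} .
\end{equation*}
As in the proof of Lemma~\ref{lem:2_gte_1}, I will use the decomposition $\Phi_C = \Phi_{i'}\circ\Phi_i$ with $i'$ orthogonal to $i$. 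This gives $\rho^C_{TAB} = (\Phi_{i'}\otimes\Id[AB])(\rho^i_{TAB})$, so $\rho^C$ is obtained from $\rho^i$ by a channel acting only on the conditioning system $T$. Conditional entropy $H(A|TB)$ does not decrease under a channel on the conditioning side; this is the data processing inequality, equivalently strong subadditivity, applied to a Stinespring dilation of $\Phi_{i'}$. The displayed inequality follows, and the chain is complete.

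The main point needing care is this step: the channel must act on the conditioned-on system, not on $A$, for data processing to give the right direction. Because every encoding channel acts only on $T$, this holds.

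For the parallel/transverse form I will argue as follows. $\mathcal{C}^\parallel = Z$, so that case is the statement with $i=Z$. For the transverse case, each fixed phase $\phi$ defines a legitimate pair of orthogonal Pauli directions: the $X$ channel along the phase-rotated axis, composed with the corresponding orthogonal dephasing, still gives $\Phi_C$. Hence the inequality $C_{AB} - C_B \geq X_A(\phi) - X_B(\phi)$ holds pointwise in $\phi$, and its left side does not depend on $\phi$. Averaging over $\phi$ then gives $C_{AB} - C_B \geq \mathcal{C}_A^\perp - \mathcal{C}_B^\perp$.
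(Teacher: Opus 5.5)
Your proof is correct and follows the paper's route. Both arguments pass through the intermediate inequality $C_{AB} - i_{AB} \geq C_B - i_B$ and then apply Lemma~\ref{lem:joint_gte_ind}. The only difference is cosmetic: you get the first step from data processing for $H(A|TB)$ under $\Phi_{i'}$ acting on the conditioning system, while the paper writes it as strong subadditivity for $\rho^i_{TAB}$ together with $\rho^i_{AB} = \rho_{AB}$, and the two are equivalent. Your pointwise-in-$\phi$ averaging for the transverse form is a worthwhile addition, since the paper leaves that step implicit.
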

\begin{proof}
    From the strong subadditivity of entropy and~\eqref{eq:marginal_equality} we have that
    \begin{eqnarray}
        H_{TAB}^i &\leq& H_{TB}^i + H_{AB}^i - H_{B}^i \\
        &\leq& H_{TB}^i + H_{AB} - H_{B}. 
    \end{eqnarray}
    Rearranging, adding 1 to each side, and applying Lemma~\ref{lem:joint_gte_ind}
    \begin{eqnarray}
        & 1 + H_{AB} - H_{TAB}^i \;\geq\; 1 + H_B - H_{TB}^i \\
        \implies& C_{AB} - i_{AB} \;\geq\; C_B - i_B \\
        \implies& C_{AB} - i_{A} \;\geq\; C_B - i_B \\
        \implies& C_{AB} - C_B \;\geq\; i_{A} - i_B        
    \end{eqnarray}
\end{proof}

We are now ready to prove the main theorem:
\begin{proof}[Proof of Theorem~\ref{thm:C-PID_non-negativity}]
    Firstly, note that the unique capacities $C_{\text{unq}A}$, $C_{\text{unq}B}$ are manifestly non-negative by construction.
    Next, recall that the C-PID consists of four cases based on the sign and magnitude of $C_A - C_B$ with respect to the one-bit unique capacities $\Gamma_A$, $\Gamma_B$ (see Table~\ref{tab:CPID}).
    
    For cases 1 and 4, the redundant and synergistic capacities are non-negative by the non-negativity of superdense capacities and Lemma~\ref{lem:joint_gte_ind}, respectively.

    Cases 2 and 3 are symmetric under $A \leftrightarrow B$, so it suffices to prove non-negativity of $C_\text{red}$ and $C_\text{syn}$ for only one case.
    The condition for case 3 is
    \begin{equation} \label{eq:case_3_condition}
        \Gamma_A - \Gamma_B \leq C_A - C_B \leq \Gamma_A,
    \end{equation}
    and the redundant and synergistic capacities in this case are given by
    \begin{eqnarray}
        C_\text{red} &=& C_A - \Gamma_A, \label{eq:case_3_red} \\ 
        C_\text{syn} &=& C_{AB} - C_B - \Gamma_A. \label{eq:case_3_syn}
    \end{eqnarray}
    From the definition of the one-bit unique capacity,
    \begin{equation} \label{eq:one-bit-A}
        \Gamma_A  = \max\left(\mathcal{C}_A^\parallel - \mathcal{C}_B^\parallel, 0\right) + \max\left(\mathcal{C}_A^\perp - \mathcal{C}_B^\perp, 0\right),
    \end{equation}
    we see that there are four sub-cases, depending on which of the two terms in~\eqref{eq:one-bit-A} are nonzero.

    When both terms are zero, the equations~\eqref{eq:case_3_red} and~\eqref{eq:case_3_syn} are the same as in C-PID cases 1 and 4 (and hence are non-negative).
    If both terms are nonzero, then $\Gamma_B = 0$, the condition~\eqref{eq:case_3_condition} implies that $\Gamma_A = C_A - C_B$, and the expressions for redundant and synergistic capacities again reduce to those of cases 1 and 4.
    
    Finally, consider the sub-case in which exactly one term is nonzero.
    Assume (without loss of generality) that the parallel term is nonzero, i.e.,\ $\Gamma_A = \mathcal{C}_A^\parallel - \mathcal{C}_B^\parallel$.
    The redundant capacity is then
    \begin{eqnarray}
        C_\text{red} &=& C_A - \left(\mathcal{C}_A^\parallel - \mathcal{C}_B^\parallel\right) \\
        &=& \left(C_A - \mathcal{C}_A^\parallel\right) + \mathcal{C}_B^\parallel.
    \end{eqnarray}
    The term in parentheses is non-negative by Lemma~\ref{lem:2_gte_1}, while the remaining term is non-negative by the non-negativity of capacities.
    Simultaneously, the synergistic capacity is
    \begin{equation}
        C_\text{syn} = \left(C_{AB} - C_B\right) - \left(\mathcal{C}_A^\parallel - \mathcal{C}_B^\parallel\right),
    \end{equation}
    which is non-negative by Lemma~\ref{lem:diff_of_diff}.
\end{proof}

\section{\label{sec:PureStates} Pure States}

When the collective state $\rho_{TAB} = \dyad{\psi_{TAB}}$ is pure, the one-bit unique capacities are in fact suitable definitions for the unique superdense capacities. We prove the following theorem:
\begin{theorem}[Pure-State Unique Capacity] \label{thm:pure_states}
    For pure $\rho_{TAB} = \dyad{\psi_{TAB}}$, at most one of $C_{\text{unq}A}$ and $C_{\text{unq}B}$ is nonzero. In fact, at most one of $\Gamma_A$ and $\Gamma_B$ can be nonzero, and will be equal to $\abs{C_A - C_B}$, so that $C_{\text{unq}A(B)} = \Gamma_{A(B)}$.
\end{theorem}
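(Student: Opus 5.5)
The plan is to show that for a pure state every one-bit capacity difference $i_A - i_B$ ($i\in\{X,Y,Z\}$) takes the same value, namely half of $C_A - C_B$. The theorem then follows by substituting into the definitions of $\Gamma$, $\Delta$, $\gamma$ and $C_{\text{unq}}$.

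\emph{Step 1 (two-bit difference).} For pure $\ket{\psi_{TAB}}$, complementary marginals have equal entropies: $H_{TA}=H_B$ and $H_{TB}=H_A$. Using $C_S = \log d_T + H(\rho_S) - H(\rho_{TS})$ with $d_T=2$, we get $C_A = 1 + H_A - H_B$ and $C_B = 1 + H_B - H_A$. Hence $C_A - C_B = 2(H_A - H_B)$.

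\emph{Step 2 (one-bit difference).} The channel $\Phi_i$ completely dephases $T$ in the eigenbasis $\{\ket{k_i}\}$ of $\sigma_i$. Write $\ket{\psi_{TAB}} = \sum_k \sqrt{p_k}\,\ket{k_i}_T\ket{\varphi_k}_{AB}$, with each $\ket{\varphi_k}$ normalized. Then
\begin{equation*}
\Phi_i(\rho_{TAB}) = \sum_k p_k \dyad{k_i}\otimes\dyad{\varphi_k}.
\end{equation*}
Since this state is block diagonal in $T$,
\begin{equation*}
H^i_{TA} = h(\{p_k\}) + \sum_k p_k H(\rho_A^{(k)}),
\end{equation*}
where $\rho_A^{(k)}=\mathrm{Tr}_B\dyad{\varphi_k}$, and $H^i_{TB}$ is given by the same formula with $A\to B$. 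Each $\ket{\varphi_k}$ is pure on $AB$, so $H(\rho_A^{(k)})=H(\rho_B^{(k)})$ and therefore $H^i_{TA}=H^i_{TB}$. Combining with Step 1,
\begin{equation*}
i_A - i_B = (H^i_{TA}-H_B) - (H^i_{TB}-H_A) = H_A - H_B = \tfrac12(C_A-C_B).
\end{equation*}
This holds for every basis, in particular for every phase $\phi$ in the $X$/$Y$ family. Averaging over $\phi$ therefore gives
\begin{equation*}
\mathcal{C}_A^\parallel-\mathcal{C}_B^\parallel=\mathcal{C}_A^\perp-\mathcal{C}_B^\perp=\tfrac12(C_A-C_B).
\end{equation*}

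\emph{Step 3 (plug in).} Let $D=C_A-C_B$. Both terms in $\Gamma_A$ equal $\max(D/2,0)$, so $\Gamma_A=\max(D,0)$ and likewise $\Gamma_B=\max(-D,0)$. Hence at most one of them is nonzero, and the nonzero one equals $\abs{D}$. This gives $\Gamma_A-\Gamma_B = D$, so $\Delta_A=\Delta_B=0$ and $\gamma_{A(B)}=\Gamma_{A(B)}\ge 0$. The definitions \eqref{eq:C_unqA_def} and \eqref{eq:C_unqB_def} then give $C_{\text{unq}A(B)}=\Gamma_{A(B)}$, of which at most one is nonzero. This is also consistent with Table~\ref{tab:CPID}: the pure-state case falls on the boundaries where case 1 coincides with case 3 and case 2 with case 4.

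The main obstacle is Step 2, specifically checking that the phase ambiguity in the transverse capacities is harmless. The Schmidt-type decomposition argument must be applied for each fixed phase $\phi$ before averaging. Since the identity $i_A-i_B=H_A-H_B$ has a right-hand side independent of the basis, the average of $X_A-X_B$ equals the same constant. I will also need to confirm that $\mathcal{C}^\perp_A-\mathcal{C}^\perp_B$ is the average of the differences, which is immediate by linearity of the average.
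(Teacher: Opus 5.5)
Your proof is correct and has the same skeleton as the paper's proof. Purity gives $C_A - C_B = 2(H_A - H_B)$. Equal entropies of the dephased marginals $\rho^i_{TA}$ and $\rho^i_{TB}$ give $i_A - i_B = H_A - H_B$. Substitution then finishes the argument.

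The difference is in how you establish the equal-entropy fact, which is the paper's Lemma~\ref{lem:pure_same_eigenvalues}. The paper expands $\rho_{TAB}$ in the three-qubit Pauli basis and writes $\rho^i_{TA}$ as two $2\times 2$ blocks. It then shows their traces and determinants are invariant under $A\leftrightarrow B$, using a two-term Schmidt decomposition of $\braket{\pm_i}{\psi_{TAB}}$. That yields equal spectra, but it tacitly assumes $A$ and $B$ are qubits.

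Your classical--quantum decomposition $H^i_{TA} = h(\{p_k\}) + \sum_k p_k H(\rho_A^{(k)})$ rests on the same observation: each branch $\braket{k_i}{\psi_{TAB}}$ is pure on $AB$. It is shorter, needs only the entropy rather than the full spectrum, and holds for arbitrary $d_A, d_B$, which the paper's setup explicitly allows.

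Your final step is also tighter than the paper's. The paper ends by asserting individual values $\mathcal{C}_A^\parallel = \mathcal{C}_A^\perp = H_A - H_B$ and $\mathcal{C}_B^\parallel = \mathcal{C}_B^\perp = 0$. These do not follow, since only the differences are fixed by $H_A - H_B$; for the W state, for instance, both parallel capacities equal $2/3$ (Table~\ref{tab:W_and_GHZ_CPID}). You use only the differences and handle the phase average by linearity. You also check explicitly that $\Delta_A = \Delta_B = 0$, so $\gamma_{A(B)} = \Gamma_{A(B)} \geq 0$ and $C_{\text{unq}A(B)} = \Gamma_{A(B)}$ follows from the actual definitions. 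The paper leaves all of this implicit.
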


We first establish the following Lemma:
\begin{Lemma} \label{lem:pure_same_eigenvalues}
    For pure $\rho_{TAB} = \dyad{\psi_{TAB}}$, the one-bit encoded states $\rho_{TA}^i$ and $\rho_{TB}^i$ (see equation~\eqref{eq:encoded_states}) have the same eigenvalues, and hence the same entropy.
\end{Lemma}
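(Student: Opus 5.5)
The plan is to make the one-bit encoding channels explicit as projective dephasings on the target and exploit the block structure this produces. For $i\in\{X,Y,Z\}$, let $\{\ket{k_i}\}_{k=\pm}$ be the eigenbasis of $\sigma_i$ on $T$, with projectors $P_k=\dyad{k_i}\otimes\Id[AB]$. Since $\frac12(\rho+\sigma_i\rho\sigma_i)=\sum_k P_k\rho P_k$ (the completely dephasing channel noted in Section~\ref{sec:capacity_oneBit}), we have $\rho^i_{TS}=\sum_k P_k\rho_{TS}P_k$ for $S\in\{A,B\}$.

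First I expand the pure state in this basis as
\begin{equation*}
\ket{\psi_{TAB}}=\sum_{k=\pm}\sqrt{p_k}\,\ket{k_i}_T\otimes\ket{\varphi_k}_{AB},
\end{equation*}
with normalized (not necessarily orthogonal) pure states $\ket{\varphi_k}$ on $AB$ and $p_k=\bra{\psi}P_k\ket{\psi}$. Dephasing removes the cross terms between $k=+$ and $k=-$. Because the partial traces over $A$ or $B$ commute with the projectors acting on $T$, this gives
\begin{align*}
\rho^i_{TA}&=\sum_k p_k\,\dyad{k_i}\otimes\operatorname{Tr}_B\dyad{\varphi_k},\\
\rho^i_{TB}&=\sum_k p_k\,\dyad{k_i}\otimes\operatorname{Tr}_A\dyad{\varphi_k}.
\end{align*}
Both operators are therefore block diagonal with respect to the orthogonal target states $\ket{k_i}$. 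Their spectra are the unions over $k$ of $p_k$ times the spectra of the blocks.

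The second step is the Schmidt decomposition. Each $\ket{\varphi_k}$ is a pure bipartite state on $AB$, so $\operatorname{Tr}_B\dyad{\varphi_k}$ and $\operatorname{Tr}_A\dyad{\varphi_k}$ have the same nonzero eigenvalues, namely the squared Schmidt coefficients. Hence $\rho^i_{TA}$ and $\rho^i_{TB}$ have identical nonzero eigenvalues with identical multiplicities, and so $H^i_{TA}=H^i_{TB}$. Taking the dephasing basis to be arbitrary, this covers $Z$ and also every phase-rotated $X$ or $Y$ direction. It therefore carries over to the phase-averaged transverse quantities, since the equality holds pointwise in $\phi$.

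The only point needing care, which I expect to be the mild obstacle, is that $d_A\neq d_B$ in general. In that case the two operators act on spaces of different dimension and differ in their number of zero eigenvalues. I will therefore phrase ``same eigenvalues'' as same nonzero spectrum, which is all the entropy statement requires, since zero eigenvalues contribute nothing to the von Neumann entropy. I would also note that the same argument applied to $\Phi_C$ is unnecessary, as $C_A$ and $C_B$ are handled separately in the proof of Theorem~\ref{thm:pure_states}.
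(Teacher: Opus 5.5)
Your proof is correct. It rests on the same core fact as the paper's proof: after dephasing in the $\sigma_i$ eigenbasis, $\rho^i_{TA}$ and $\rho^i_{TB}$ are block diagonal, each block comes from a marginal of the pure conditional state $\langle\pm_i|\psi_{TAB}\rangle$, and the Schmidt decomposition of that state links the $A$- and $B$-blocks. You reach it by a different, more direct route.

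The paper expands $\rho_{TAB}$ in the three-qubit Pauli basis and writes the blocks $L^i,R^i$ as explicit $2\times 2$ matrices in the coefficients $d_{lmn}$. It then shows their determinants are invariant under $A\leftrightarrow B$, because the Bloch vector of the $A$-marginal of $\langle\pm_i|\psi_{TAB}\rangle$ has length $\propto|\lambda_1-\lambda_2|$. That approach gives explicit formulas, but it tacitly restricts to single-qubit $A$ and $B$ (two Schmidt terms, $2\times 2$ blocks).

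Your comparison of the nonzero spectra of the two marginals of each $\ket{\varphi_k}$ does not depend on dimension. It therefore covers larger and unequal $d_A,d_B$, which is the setting the paper actually relies on, e.g.\ for multi-qubit subsets in the error-correction application. It also holds for every dephasing basis at once, so the phase-averaged transverse case follows immediately.

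Carrying the weights $p_k$ explicitly also avoids a small bookkeeping slip in the paper. Its blocks omit the $d_{i00}$ term, so the stated block traces of $1/2$ are right for $X$ and $Y$ but should be $p_\pm$ for $Z$. This is harmless there because that term is common to $A$ and $B$, which your formulation makes manifest.
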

\begin{proof}
    We begin by expressing $\rho_{TAB}$ in the Pauli basis,
\[ \rho_{TAB} = \frac{1}{8}\sum_{l,m,n=0}^3 d_{lmn}\sigma_{lmn}, \]
where
\[ \sigma_{lmn} = \sigma_l \otimes \sigma_m \otimes \sigma_n, \]
and
\begin{eqnarray}
  d_{lmn} &=&\Tr[\rho_{TAB}\sigma_{lmn}] \\
  &=&\mel{\psi_{TAB}}{\sigma_{lmn}}{\psi_{TAB}}.
\end{eqnarray}
Note that, for all $\rho_{TAB}$, $d_{000} = 1$.

We define the Pauli matrices for the one-bit encodings with respect to the eigenbasis of $\rho_T$, but this is equivalent to first diagonalizing $\rho_T$ with respect to the standard basis and using the standard Pauli matrices.
The condition that $\rho_T$ be diagonal implies that
\[ d_{100} = d_{200} = 0. \]
Using the identity
\[ \sigma_i\sigma_j\sigma_i = 2\delta_{ij}\sigma_i - \sigma_j \]
for $i,j\in\{1,2,3\}$, we can write the one-bit encoded states as
\begin{eqnarray}
  \rho_{TA}^i &=&\frac{1}{4}\sum_{l=0}^3 \left(d_{0l0}\sigma_{0l0} + d_{il0}\sigma_{il0}\right), \\
  \rho_{TB}^i &=&\frac{1}{4}\sum_{l=0}^3 \left(d_{00l}\sigma_{00l} + d_{i0l}\sigma_{i0l}\right).
\end{eqnarray}

The reduced encoded states $\rho_T^i$ are now diagonal in the eigenbasis of the encoding Pauli $\sigma_i$.
We denote the eigenstates of the Paulis by $\{\ket{+_i}, \ket{-_i}\}$, where
\begin{align}
  \ket{+_1} &= \frac{1}{\sqrt{2}}\left(\ket{0} + \ket{1}\right), &\; \ket{-_1} &= \frac{1}{\sqrt{2}}\left(\ket{0} - \ket{1}\right) \nonumber\\
  \ket{+_2} &= \frac{1}{\sqrt{2}}\left(\ket{0} + i\ket{1}\right), &\; \ket{-_2} &= \frac{1}{\sqrt{2}}\left(\ket{0} - i\ket{1}\right) \nonumber\\
  \ket{+_3} &= \ket{0}, &\; \ket{-_3} &= \ket{1}.
\end{align}
The states $\rho_{TA}^i$ and $\rho_{TB}^i$ are then block-diagonal in the basis $\{\ket{+_i\;0}, \ket{+_i\;1}, \ket{-_i\;0}, \ket{-_i\;1}\}$, i.e.,\
\begin{equation}
    \rho_{TA}^i = \mqty(L^i & 0 \\ 0 & R^i),
\end{equation}
where
\begin{widetext}
\begin{eqnarray}
  L^i &=&\frac{1}{4}\mqty(d_{000} + d_{030} + d_{i30} & d_{010} + d_{i10} -i(d_{020} + d_{i20}) \\ d_{010} + d_{i10} +i(d_{020} + d_{i20}) & d_{000} - d_{030} - d_{i30}) \\
  R^i &=&\frac{1}{4}\mqty(d_{000} + d_{030} - d_{i30} & d_{010} - d_{i10} -i(d_{020} - d_{i20}) \\ d_{010} - d_{i10} +i(d_{020} - d_{i20}) & d_{000} - d_{030} + d_{i30}).
\end{eqnarray}
\end{widetext}
The matrices for $\rho_{TB}^i$ in this basis have the exact same form, just with $d_{lmn} \to d_{lnm}$.
Both submatrices have a trace of 1/2, and their determinants are given by
\begin{widetext}
    \begin{eqnarray}
      \det L^i &=&\frac{1}{16}\big[d_{000}^2 - \left((d_{010} + d_{i10})^2 + (d_{020} + d_{i20})^2 + (d_{030} + d_{i30})^2\right)\big] \nonumber\\
      \det R^i &=&\frac{1}{16}\big[d_{000}^2 - \left((d_{010} - d_{i10})^2 + (d_{020} - d_{i20})^2 + (d_{030} - d_{i30})^2\right)\big].
    \end{eqnarray}
\end{widetext}

Therefore, to show that $\rho_{TA}^i$ and $\rho_{TB}^i$ have the same eigenvalues, it suffices to show that the expression
\[ \sum_{j=1}^3 (d_{0j0} \pm d_{ij0})^2 \]
is invariant under the transposition of the last two indices of the $d$s. We have that
\begin{eqnarray*}
  d_{0j0} \pm d_{ij0} &=& \mel{\psi_{TAB}}{\sigma_{0j0} + \sigma_{ij0}}{\psi_{TAB}} \\
  &=& \mel{\psi_{TAB}}{(\Id \pm \sigma_i)\otimes\sigma_j\otimes\Id}{\psi_{TAB}} \\
  &=& 2\mel{\psi_{TAB}}{\dyad{\pm_i}\otimes\sigma_j\otimes\Id}{\psi_{TAB}}.
\end{eqnarray*}
Now, let $\ket{\phi_\pm^i} = \braket{\pm_i}{\psi_{TAB}}$ be the (unnormalized) state resulting from projecting the $T$ subspace onto $\ket{\pm_i}$.
As $\ket{\phi_\pm^i}$ is a pure state, we can use the Schmidt decomposition to write it as
\[ \ket{\phi_\pm^i} = \sqrt{\lambda_1}\ket{a_1b_1} + \sqrt{\lambda_2}\ket{a_2b_2} \]
for some orthonormal bases $\{\ket{a_1}, \ket{a_2}\}$ and $\{\ket{b_1}, \ket{b_2}\}$.

Now,
\begin{eqnarray}
  \frac{1}{2}(d_{0j0} \pm d_{ij0}) &=&\mel{\phi_\pm^i}{\sigma_j\otimes\Id}{\phi_\pm^i} \nonumber\\
  &=&\left(\lambda_1 \mel{a_1}{\sigma_j}{a_1} + \lambda_2 \mel{a_2}{\sigma_j}{a_2}\right) \nonumber\\
  &=&\mel{a_1}{\sigma_j}{a_1}(\lambda_1 - \lambda_2).
\end{eqnarray}

Finally, then,
\begin{eqnarray}
  \sum_{j=1}^3 (d_{0j0} \pm d_{ij0})^2 &\propto& (\lambda_1 - \lambda_2)^2 \sum_{j=1}^3 \mel{a_1}{\sigma_j}{a_1}^2 \nonumber \\
  &\propto& (\lambda_1 - \lambda_2)^2,
\end{eqnarray}
where the constant of proportionality is determined by the norm of $\ket{\phi_\pm^i}$.
Thus, the determinants of $L^i$ and $R^i$ are invariant under $A \leftrightarrow B$, and hence $\rho_{TA}^i$ and $\rho_{TB}^i$ have the same eigenvalues.
\end{proof}

\begin{proof}[Proof of Theorem~\ref{thm:pure_states}]
    The difference in the individual superdense capacities for pure states is given by
    \begin{eqnarray}
        C_A - C_B &=&\left(1 + H_A - H_{TA}\right) - \left(1 + H_B - H_{TB}\right) \nonumber \\
        &=&\left(H_A - H_B\right) + \left(H_{TB} - H_{TA}\right) \nonumber \\
        &=&2\left(H_A - H_B\right).
    \end{eqnarray}
    The difference between one-bit capacities is, considering e.g.,\ the $X$ encoding,
    \begin{eqnarray}
        X_A - X_B &=&\left(H_{TA}^X - H_{TA}\right) - \left(H_{TB}^X - H_{TB}\right) \nonumber \\
        &=&\left(H_{TB} - H_{TA}\right) + \left(H_{TA}^X - H_{TB}^X\right) \nonumber \\
        &=&\left(H_A - H_B\right) + \left(H_{TA}^X - H_{TB}^X\right),
    \end{eqnarray}
    which, since $H_{TA}^X = H_{TB}^X$ by Lemma~\ref{lem:pure_same_eigenvalues}, is precisely half of the individual superdense capacity difference.
    The same argument applies to the $Y$ and $Z$ encodings.
    Hence, assuming without loss of generality that $H_A > H_B$, the parallel and transverse capacities are
    \begin{eqnarray}
        \mathcal{C}_A^\parallel = \mathcal{C}_A^\perp &=& H_A - H_B, \\
        \mathcal{C}_B^\parallel = \mathcal{C}_B^\perp &=& 0
    \end{eqnarray}
    From which it follows that $\Gamma_A = C_A - C_B$ and $\Gamma_B = 0$.
    Similarly, if $H_A < H_B$, then $\Gamma_B = C_B - C_A$ and $\Gamma_A = 0$.
\end{proof}

\section{\label{sec:ClassicalStates}Quasi-Classical States}

Consider a classical joint probability distribution $p(t,a,b)$ encoded into a tripartite state $\rho_{TAB}$ via
\begin{equation}
    \rho_{TAB} = \sum_{t,a,b} p(t,a,b)\dyad{t,a,b}, \label{eq:classical-state}
\end{equation}
for some choice of orthonormal bases $\{\ket{a}\}$, $\{\ket{b}\}$ for $\mathcal{H}_A$ and $\mathcal{H}_B$.
On such quasi-classical states, all reduced operators are diagonal, the von Neumann entropy reduces to the Shannon entropy, and we write $H_T$, $I(T;A) \equiv I_{TA}$, etc., interchangeably for the quantum and classical quantities.
For these states the C-PID reduces to the well-known minimal mutual information (MMI) PID, as we prove in the following theorem:
\begin{theorem}[Quasi-Classical C-PID] \label{thm:classical_states}
    For a quasi-classical state~\eqref{eq:classical-state}, the C-PID coincides with the MMI PID evaluated on the superdense capacities, and is related to the mutual-information MMI PID by a shift of $(1 - H_T)$ in the redundant information.
\end{theorem}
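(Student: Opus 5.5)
The plan is to show that for the quasi-classical state~\eqref{eq:classical-state} the parallel capacities vanish and the transverse capacities equal the superdense capacities. The C-PID then collapses to case~1 or case~4 of Table~\ref{tab:CPID}, and these are exactly the MMI formulas.

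\emph{Step 1: the parallel capacities vanish.} Since $\rho_{TA}$ and $\rho_{TB}$ are diagonal in $\ket{t}\otimes\ket{a}$ (resp.\ $\ket{t}\otimes\ket{b}$), conjugation by $\targetPauli{z}$ leaves them invariant. Hence $\Phi_Z(\rho_{TS})=\rho_{TS}$, and therefore $\mathcal{C}_S^\parallel=Z_S=0$ for $S\in\{A,B\}$.

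\emph{Step 2: the transverse capacities equal the superdense capacities.} Write $\rho_{TS}=\sum_{t,s}p(t,s)\dyad{t,s}$. Then $\Phi_X(\rho_{TS})$ is block diagonal in $s$, with blocks
\[
\tfrac12\,\mathrm{diag}\bigl(p(0,s)+p(1,s),\,p(0,s)+p(1,s)\bigr).
\]
These blocks are already maximally mixed on $T$ for each $s$, so $\Phi_X(\rho_{TS})=\tfrac12\Id\otimes\rho_S=\Phi_C(\rho_{TS})$. The same holds for $\Phi_Y$ and for any phase $\phi$ in the definition of the transverse Pauli, because conjugating a $T$-diagonal operator by $\cos\phi\,\sigma_x+\sin\phi\,\sigma_y$ also just swaps the diagonal entries. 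So no phase averaging is needed, and $\mathcal{C}_S^\perp=X_S=C_S$. This is the one step requiring care: one must check that the phase ambiguity and the choice of eigenbasis of $\rho_T$, which is diagonal here, cause no problem. It also supplies the closed form $C_S=1+H_S-H_{TS}=I_{TS}+1-H_T$.

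\emph{Step 3: evaluate the C-PID.} With these values,
\[
\Gamma_A=\max(C_A-C_B,0),\qquad \Gamma_B=\max(C_B-C_A,0).
\]
Take $C_A\ge C_B$ without loss of generality. Then $\Gamma_A=C_A-C_B$ and $\Gamma_B=0$, so the deficiency $\Delta_A=0$. The case boundaries in Table~\ref{tab:CPID} place us in case~3 (degenerate with case~4), giving
\[
C_{\text{unq}A}=C_A-C_B,\quad C_{\text{unq}B}=0,\quad C_\text{red}=C_B=\min(C_A,C_B),\quad C_\text{syn}=C_{AB}-\max(C_A,C_B).
\]
This is precisely the MMI PID applied to $(C_{AB},C_A,C_B)$.

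\emph{Step 4: relate to the mutual-information MMI PID.} Substitute $C_S=I_{TS}+(1-H_T)$ and $C_{AB}=I_{T;AB}+(1-H_T)$; the common offset cancels in every difference. This gives $C_{\text{unq}}=I_{\text{unq}}$, $C_\text{syn}=I_\text{syn}$ and $C_\text{red}=\min(I_{TA},I_{TB})+(1-H_T)=I_\text{red}+(1-H_T)$, which is the claimed shift.
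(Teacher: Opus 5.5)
Your proposal is correct and follows the paper's proof essentially step for step. The paper also first shows, in a lemma, that $\Phi_Z$ leaves $\rho_{TS}$ invariant and that $\Phi_X(\rho_{TS})=\tfrac12\Id[T]\otimes\rho_S$, so $\mathcal{C}_S^\parallel=0$ and $\mathcal{C}_S^\perp=C_S$. It then reads off the MMI formulas and obtains the $(1-H_T)$ shift from $C_S=I(T;S)+1-H_T$.

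Your explicit check of the phase-rotated transverse Pauli and your direct evaluation through $\Gamma$, $\Delta$, $\gamma$ only fill in what the paper asserts in one line. One small labelling point: under the strict inequalities of Table~\ref{tab:CPID}, $C_A>C_B$ lands on the boundary of case~2, not case~3. Nothing changes, because all the case formulas coincide there.
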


The proof relies on the following Lemma:
\begin{Lemma} \label{lem:classical_one-bit}
    For a quasi-classical state~\eqref{eq:classical-state}, the parallel one-bit capacity is zero and the transverse one-bit capacity reduces to the superdense capacity,
    \begin{equation}
        \mathcal{C}_A^\parallel = 0, \qquad \mathcal{C}_A^\perp = C_A,
    \end{equation}
    and similarly for $B$.
\end{Lemma}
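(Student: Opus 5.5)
The plan is a direct computation. It uses the convention of Section~\ref{sec:capacity_oneBit} that the Pauli operators are defined with respect to an eigenbasis of $\rho_T$. For a quasi-classical state~\eqref{eq:classical-state}, the reduced target state $\rho_T=\sum_t p(t)\dyad{t}$ is already diagonal in the basis $\{\ket{t}\}$, so I will take this basis as the computational ($Z$) basis. If $\rho_T$ is degenerate, any eigenbasis is admissible, and I will fix this one. The reduced state is
\[
\rho_{TA}=\sum_{t,a}p(t,a)\dyad{t}\otimes\dyad{a},
\]
and the same form holds for $\rho_{TB}$.

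\textbf{Parallel capacity.} $\Phi_Z$ is the completely dephasing channel in the $\{\ket{t}\}$ basis, and $\rho_{TA}$ is already diagonal in $\{\ket{t}\}\otimes\{\ket{a}\}$. Hence $\Phi_Z(\rho_{TA})=\rho_{TA}$. Therefore $\mathcal{C}_A^\parallel=Z_A=H(\Phi_Z(\rho_{TA}))-H(\rho_{TA})=0$.

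\textbf{Transverse capacity.} Fix the phase $\phi$ and write the transverse Pauli as $\sigma_\phi=\cos\phi\,\sigma_x+\sin\phi\,\sigma_y$. For every $\phi$, conjugation by $\sigma_\phi$ sends $\dyad{0}$ to $\dyad{1}$ and $\dyad{1}$ to $\dyad{0}$. It follows that
\[
\Phi_\phi(\rho_{TA})=\tfrac12\sum_{t,a}p(t,a)\big(\dyad{t}+\dyad{1-t}\big)\otimes\dyad{a}=\tfrac{1}{2}\Id\otimes\rho_A .
\]
The completely depolarizing channel gives the same output: $\Phi_C(\rho_{TA})=\tfrac12\Id\otimes\rho_A$, because it maps any target operator to its trace times $\Id/2$.

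Consequently $H(\Phi_\phi(\rho_{TA}))=1+H_A=H(\Phi_C(\rho_{TA}))$. This gives $X_A=Y_A=C_A$ for every phase, so the phase average in the definition of $\mathcal{C}_A^\perp$ is trivial and $\mathcal{C}_A^\perp=C_A$. The identical argument with $A\to B$ gives the claim for $B$.

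The only step needing care is the basis convention. The lemma relies on the encoding Paulis being aligned with the classical basis $\{\ket{t}\}$. When $\rho_T$ is nondegenerate this is forced up to phases, and the phase freedom is harmless by the computation above. In the degenerate case it is a choice consistent with the paper's convention, and I would state it explicitly. Everything else reduces to the two one-line channel evaluations.
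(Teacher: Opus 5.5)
Your proof is correct and matches the paper's: $\Phi_Z$ leaves the diagonal $\rho_{TA}$ invariant, so $Z_A=0$. The transverse dephasing sends $\rho_{TA}$ to $\frac{1}{2}\Id\otimes\rho_A$, which is exactly the depolarized output defining $C_A$. Two things you do are useful additions to the paper's version: you handle a general phase through $\sigma_\phi=\cos\phi\,\sigma_x+\sin\phi\,\sigma_y$, where the paper only checks $X$ and $Y$, and you flag that the lemma relies on fixing the classical basis when $\rho_T$ is degenerate.
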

\begin{proof}
For the parallel capacity, this follows from the fact that $\targetPauli{z}$ and $\rho_{TA}$ are simultaneously diagonal, hence $\Phi_Z(\rho_{TA}) = \frac{1}{2}(\rho_{TA} + \targetPauli{z}\rho_{TA}\targetPauli{z}) = \rho_{TA}$ and $Z_A = 0$.
For the transverse capacity, we note that
\begin{eqnarray*}
        \targetPauli{x}\rho_{TA}\targetPauli{x} &=&\sum_{t,a} p(t,a)\dyad{\bar{t}, a} \\
        &=&\sum_{t,a} p(\bar{t}, a)\dyad{t,a},
\end{eqnarray*}
and so
\begin{eqnarray*}
    \Phi_X(\rho_{TA}) &=& \frac{1}{2}\left(\rho_{TA} + \targetPauli{x}\rho_{TA}\targetPauli{x}\right) \\
    &=& \frac{1}{2} \sum_{t,a}\left(p(t,a)+p(\bar{t},a)\right)\dyad{t,a} \\
    &=& \frac{1}{2} \sum_t \dyad{t} \otimes \sum_a p(a)\dyad{a} \\
    &=& \frac{1}{2} \Id[T] \otimes\rho_A,
\end{eqnarray*}
whence
\begin{eqnarray*}
    X_A &=& H\left(\Phi_X(\rho_{TA})\right) - H(\rho_{TA}) \\
    &=& \left(1 + H(\rho_A)\right) - H(\rho_{TA}) \\
    &=& C_A.
\end{eqnarray*}
The exact same argument applies to the $Y$ capacity, so $X_A = Y_A$ and phase averaging is trivial.
\end{proof}

\begin{proof}[Proof of Theorem~\ref{thm:classical_states}]
By Lemma~\ref{lem:classical_one-bit}, the full C-PID reduces to
\begin{eqnarray}
    C_{\text{unq}A} &=& \max(C_A - C_B, 0) \nonumber \\
    C_{\text{unq}B} &=& \max(C_B - C_A, 0) \nonumber \\
    C_\text{red} &=& \min(C_A, C_B) \nonumber \\
    C_\text{syn} &=& C_{AB} - \max(C_A, C_B).
\end{eqnarray}

This is exactly the well-known ``minimal mutual information'' PID~\cite{Barrett_2015}
\begin{eqnarray}
    I_{\text{unq}A} &=& \max(I(T;A) - I(T;B), 0) \nonumber \\
    I_{\text{unq}B} &=& \max(I(T;B) - I(T;A), 0) \nonumber \\
    I_\text{red} &=& \min(I(T;A), I(T;B)) \nonumber \\
    I_\text{syn} &=& I(T;AB) - \max(I(T;A), I(T;B)),
\end{eqnarray}
with superdense capacity in the place of mutual information.
Considering the relationship between superdense capacity and mutual information,
\begin{equation}
    C_{A} = I(T;A) + (1-H_T),
\end{equation}
we see these PIDs are related simply by a shift in redundant information:
\begin{eqnarray}
    I_{\text{unq}A} &=& C_{\text{unq}A} \nonumber \\
    I_{\text{unq}B} &=& C_{\text{unq}B} \nonumber \\
    I_\text{red} &=& C_\text{red} - (1 - H_T) \nonumber \\
    I_\text{syn} &=& C_\text{syn}
\end{eqnarray}
\end{proof}

\section{\label{sec:StabilizerStates} Superdense Capacity of Stabilizer Subsets}
Consider an $[[n,1,d]]$ stabilizer quantum error correcting code, with $n$ physical qubits encoding one logical qubit.
The codespace $\mathcal{C}$ is defined as the simultaneous +1 eigenspace of all elements of the \textit{stabilizer group} $S$, an abelian subgroup of the $n$-qubit Pauli group:
\begin{equation}
    \mathcal{C} = \{\ket{\psi}\in\mathbbm{C}^{2^n} : s\ket{\psi} = \ket{\psi}\;\forall s \in S \}.
\end{equation}
Anti-commuting logical operators $\bar{Z}$ and $\bar{X}$ are selected from $C(S)\setminus S$ (i.e.,\ elements of the Pauli group that commute with all elements of $S$ but are not themselves in $S$).
The logical basis states are then the $\pm1$ eigenvectors of $\bar{Z}$:
\begin{equation}
    \bar{Z}\ket{0_L} = \ket{0_L}, \; \bar{Z}\ket{1_L}=-\ket{1_L},
\end{equation}
and are the unique states stabilized by the extended stabilizer groups
\begin{eqnarray}
    S_0 &&\coloneqq \langle S, \bar{Z}\rangle, \\
    S_1 &&\coloneqq \langle S, -\bar{Z}\rangle,
\end{eqnarray}
i.e.,\ 
\begin{equation}
    \dyad{i_L} = \prod_{j=1}^{n} \frac{1}{2}(\Id + g_j) = \frac{1}{2^n}\sum_{g \in S_i} g,
\end{equation}
where $g_1, \ldots, g_{n-1}$ are generators of $S$ and $g_n = (-1)^i\bar{Z}$.

Consider a subset $A\subseteq\{1,2,\dots,n\}$ of the encoding qubits with $\abs{A}=m_A$. 
For any $n$-qubit Pauli operator $g = g_A \otimes g_{A^c}$, the partial trace over $A^c$ gives:
\begin{equation}
    \Tr_{A^c}[g] = g_A \cdot \Tr[g_{A^c}] = \begin{cases} 2^{n-m_A}\, g_A & \text{if } g_{A^c} = \Id, \\ 0 & \text{otherwise,} \end{cases}
\end{equation}
since non-identity Pauli operators are traceless.
Define the reduced stabilizer groups on $A$ as
\begin{eqnarray}
    S_i^{A} = \{g_A : g = g_A \otimes \Id[A^c] \in S_i\},
\end{eqnarray}
and denote the reduced density operators of the logical basis states on $A$ as
\begin{eqnarray}
    \rho_A^{\dyad{i_L}} &=& \Tr_{A^c}\dyad{i_L} \\
    &=& \frac{1}{2^{m_A}}\sum_{g_A\in S_i^A}g_A.
\end{eqnarray}
Denote the rank of the reduced stabilizer group by $k_A \coloneqq \log_2\abs{S_i^A}$.
The $\rho_A^{\dyad{i_L}}$ are proportional to projectors onto subspaces of dimension $2^{m_A}/\abs{S_i^A} = 2^{m_A - k_A}$, and hence have entropy
\begin{equation}
    H(\rho_A^{\dyad{i_L}}) = m_A - k_A. \label{eq:reduced_logical_state_ent}
\end{equation}
Note that by purity of $\dyad{i_L}$ we have that $H(\rho_A^{\dyad{i_L}}) = H(\rho_{A^c}^{\dyad{i_L}})$, so
\begin{equation}
    m_A - k_A = (n - m_A) - k_{A^c}. \label{eq:reduced_logical_state_ent_relation}
\end{equation}
Note also that if $\bar{Z}$ has a coset representative supported on $A$ then $\rho_A^0$ and $\rho_A^1$ have orthogonal support, otherwise they are equal.

For the encoded state~\eqref{eq:encoded_state}, the reduced state on $A$ is
\begin{equation}
    \rho_A = \lambda\rho_A^{\dyad{0_L}} + (1-\lambda)\rho_A^{\dyad{1_L}}
\end{equation}
If $\bar{Z}$ does not have a representative supported on $A$, then $\rho_A = \rho_A^0 = \rho_A^1$ and the entropy of the reduced state is given by~\eqref{eq:reduced_logical_state_ent}.
Otherwise, $\rho_A$ is a convex combination of two projectors with orthogonal support and therefore has entropy
\begin{eqnarray*}
    H(\rho_A) &=& h(\lambda) + \lambda H(\rho_A^0) + (1-\lambda)H(\rho_A^1) \\
    &=& H_T + m_A - k_A.
\end{eqnarray*}
By defining the accessibility indicator
\begin{equation}
    \epsilon_A = \begin{cases}
        1 & \bar{Z} \text{ supported on } A \\
        0 & \text{otherwise},
    \end{cases}
\end{equation}
we can write the entropy of the reduced state $\rho_A$ succinctly as
\begin{equation}
    H(\rho_A) = m_A - k_A + \epsilon_A H_T \label{eq:reduced_encoded_state_ent}
\end{equation}

From the purity of the full state~\eqref{eq:encoded_state}, the superdense capacity for the subset $A$ can be written in terms of the entropy of the reduced states on $A$ and $A^c$:
\begin{eqnarray}
    C_A &=& 1 + H(\rho_A) - H(\rho_{TA}) \\
    &=& 1 + H(\rho_A) - H(\rho_{A^c}).
\end{eqnarray}
Using~\eqref{eq:reduced_encoded_state_ent} and~\eqref{eq:reduced_logical_state_ent_relation}, this can then be written as
\begin{equation}
    C_A = 1 + \left(\epsilon_A - \epsilon_{A^c}\right)H_T, \label{eq:superdense_accessibility_formula}
\end{equation}
i.e.,\ the superdense capacity depends on whether or not $\bar{Z}$ has a coset representative supported on $A$ and/or its complement. 

Denote by $l_A$ the number of independent logical operator cosets supported on $A$.
The number of cosets supported on a subset and on its complement is governed by the \textit{cleaning lemma}:
\begin{Lemma}[Cleaning Lemma~\cite{haah2013latticequantumcodesexotic}] \label{lem:cleaning}
    For an $[[n,k,d]]$ stabilizer code, the number of independent logical operators supported on a subset is related to the number supported on its complement by
    \begin{equation} \label{eq:cleaning_Lemma}
        l_A + l_{A^c} = 2k.
    \end{equation}
\end{Lemma}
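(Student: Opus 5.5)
The plan is to prove the cleaning lemma by symplectic linear algebra over $\mathbb{F}_2$. Identify $n$-qubit Paulis modulo phase with vectors in $V=\mathbb{F}_2^{2n}$, carrying the symplectic form $\omega$ that records commutation. The stabilizer $S$ is an isotropic subspace of dimension $n-k$. Its centralizer is $C(S)=S^\perp$, of dimension $n+k$. The logical cosets are the classes of $S^\perp/S$, a $2k$-dimensional space on which $\omega$ is nondegenerate.

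For a subset $A$, let $V_A$ be the Paulis supported on $A$. Define $l_A=\dim\big((S^\perp\cap V_A)/(S\cap V_A)\big)$, which counts the independent logical cosets with a representative supported on $A$. The first step rewrites this as
\[
l_A = \dim(S^\perp\cap V_A) - \dim(S\cap V_A).
\]

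The key tool is the restriction map $\pi_A:V\to V_A$ and the identity
\[
(S^\perp\cap V_A) = (\pi_A S)^{\perp_A},
\]
where the right-hand side is the symplectic complement taken inside $V_A$. This holds because an operator supported on $A$ commutes with $s$ exactly when it commutes with $\pi_A(s)$. It gives $\dim(S^\perp\cap V_A)=2m_A-\dim\pi_A S$. The kernel of $\pi_A$ restricted to $S$ is $S\cap V_{A^c}$, so
\[
\dim\pi_A S = (n-k)-\dim(S\cap V_{A^c}).
\]
Combining these,
\[
l_A = 2m_A-(n-k)+\dim(S\cap V_{A^c})-\dim(S\cap V_A).
\]
Writing the same formula with $A$ and $A^c$ exchanged and adding, the two $\dim(S\cap\cdot)$ terms cancel crosswise. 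The result is
\[
l_A+l_{A^c} = 2m_A+2(n-m_A)-2(n-k)=2k.
\]

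The main obstacle is the identity $(S^\perp\cap V_A)=(\pi_A S)^{\perp_A}$ together with the dimension count $\dim W^{\perp_A}=2m_A-\dim W$. The count holds because $\omega$ restricted to $V_A$ is nondegenerate, as $V_A$ is itself a full symplectic space on $m_A$ qubits. I would also state explicitly that phases and signs are irrelevant mod $S$. The only other point to check is the definition of $l_A$: cosets are counted modulo stabilizers supported on $A$ rather than modulo $S$. These two counts agree, since $(S^\perp\cap V_A)\cap S = S\cap V_A$. This matches the paper's usage of whether $\bar Z$ "has a coset representative supported on $A$".
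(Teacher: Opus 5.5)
Your proof is correct and complete. The identity $S^\perp\cap V_A=(\pi_A S)^{\perp_A}$, the rank--nullity step $\dim\pi_A S=(n-k)-\dim(S\cap V_{A^c})$, and the crosswise cancellation are all sound. You also rightly check that counting modulo $S\cap V_A$ agrees with counting classes of $S^\perp/S$ that have a representative on $A$.

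The paper gives no proof here; it simply imports the lemma from Haah. Your argument is essentially the standard symplectic dimension count from that literature (Bravyi--Terhal, Haah), so it works as a self-contained replacement for the citation.

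Your setup also gives a short proof of the $k=1$ refinement that the paper only asserts (``it can further be shown'') but needs in the $l_A=1$ case of Theorem~\ref{thm:main}. The refinement is that $A$ and $A^c$ then support the \emph{same} logical coset.
\begin{itemize}
\item Let $W_A$ be the image of $S^\perp\cap V_A$ in $S^\perp/S$, which carries the induced nondegenerate form.
\item Operators with disjoint supports commute, so $W_{A^c}\subseteq W_A^{\perp}$.
\item Your count gives $\dim W_{A^c}=2k-\dim W_A=\dim W_A^{\perp}$, hence $W_{A^c}=W_A^{\perp}$.
\item For $k=1$ and $\dim W_A=1$, the line $W_A$ is isotropic, since every Pauli commutes with itself. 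So $W_A^{\perp}=W_A$, and therefore $W_{A^c}=W_A$.
\end{itemize}
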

For $k=1$, it can further be shown that in the $l_A = l_{A^c} = 1$ case it is in fact the \textit{same} logical operator coset that is supported on $A$ and $A^c$.

The superdense capacity of a subset $A$ can then be expressed as a function of $l_A$ according to the following theorem:
\begin{theorem} \label{thm:main}
    For the state $\ket{\Psi} = \alpha\ket{0}\ket{0_L} + \beta\ket{1}\ket{1_L}$, the superdense capacity between the unencoded qubit and any subset $A$ of the physical qubits,
    \begin{equation}
        \label{eq:main}
        C_A = 1 + (l_A - 1)\, H_T,
    \end{equation}
    or equivalently,
    \begin{equation} \label{eq:superdense_capacity_logical_count_cases}
        C_A = \begin{cases}
            1 + H_T & \text{if } l_A = 2, \\
            1       & \text{if } l_A = 1, \\
            1 - H_T & \text{if } l_A = 0.
        \end{cases}
    \end{equation}
\end{theorem}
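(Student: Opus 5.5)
The plan is to start from the already-derived accessibility formula \eqref{eq:superdense_accessibility_formula}, $C_A = 1 + (\epsilon_A - \epsilon_{A^c})H_T$. Once that is in hand, the theorem reduces to a purely combinatorial identity, $\epsilon_A - \epsilon_{A^c} = l_A - 1$, over the three possible values of $l_A$. I will verify this identity case by case using the cleaning lemma (Lemma~\ref{lem:cleaning}) with $k=1$, i.e.\ $l_A + l_{A^c} = 2$, together with the stated refinement that when $l_A = l_{A^c} = 1$ it is the same logical coset that is supported on both $A$ and $A^c$.

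\emph{Case $l_A = 2$.} Both $\bar Z$ and $\bar X$ cosets have representatives on $A$, so $\epsilon_A = 1$. The cleaning lemma gives $l_{A^c} = 0$, so no nontrivial logical (in particular no representative of $\bar Z$) is supported on $A^c$, and $\epsilon_{A^c} = 0$. Hence $C_A = 1 + H_T$.

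\emph{Case $l_A = 0$.} This is symmetric: $\epsilon_A = 0$, while $l_{A^c} = 2$ forces $\epsilon_{A^c} = 1$, giving $C_A = 1 - H_T$.

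\emph{Case $l_A = 1$.} This is the delicate case, because $\epsilon$ only tracks $\bar Z$, and $A$ might support $\bar X$ or $\bar Y$ instead. By the refinement of the cleaning lemma, $A$ and $A^c$ support the same coset, so either both contain a $\bar Z$ representative ($\epsilon_A = \epsilon_{A^c} = 1$) or neither does ($\epsilon_A = \epsilon_{A^c} = 0$). In both situations the difference vanishes and $C_A = 1$.

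\textbf{Main obstacle.} The step I expect to need care is justifying the ``same coset'' refinement, since the paper only asserts it. I would argue as follows. Suppose $A$ supports logical $P$ and $A^c$ supports logical $Q$. Operators with disjoint supports commute. If $P$ and $Q$ were different nontrivial logical cosets, they would represent anticommuting logical Paulis, which is a contradiction; so they lie in the same coset. A secondary subtlety is that \eqref{eq:superdense_accessibility_formula} was derived with $\bar Z$ as the distinguished basis operator. I would note that this is consistent: the capacity depends only on whether $\bar Z$ is accessible on each side, and in case $l_A = 1$ the two indicators coincide regardless of which coset it is. Finally, the equivalence of the two displayed forms \eqref{eq:main} and \eqref{eq:superdense_capacity_logical_count_cases} is immediate from $l_A \in \{0,1,2\}$.
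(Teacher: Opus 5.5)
Your proposal is correct and follows essentially the same route as the paper: plug into the accessibility formula $C_A = 1 + (\epsilon_A - \epsilon_{A^c})H_T$, then verify $\epsilon_A - \epsilon_{A^c} = l_A - 1$ case by case using the cleaning lemma with $k=1$. Your disjoint-support argument for the ``same coset'' refinement when $l_A = 1$ is valid: for $k=1$ distinct nontrivial logical cosets anticommute, while operators with disjoint supports commute. It fills in a step the paper only asserts.
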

\begin{proof}
    The three cases of~\eqref{eq:superdense_capacity_logical_count_cases} follow from~\eqref{eq:superdense_accessibility_formula}.
    If $l_A = 2$, every logical operator has a representative on $A$, so $\epsilon_A = 1$; by~\eqref{eq:cleaning_Lemma} $l_{A^c} = 0$, so $\epsilon_{A^c} = 0$, and~\eqref{eq:superdense_accessibility_formula} gives $C_A = 1 + H_T$.
    The case $l_A = 0$ is symmetric, giving $C_A = 1 - H_T$.
    For $l_A = 1$, both $A$ and $A^c$ support the same single logical operator.
    Regardless of whether or not that operator is $\bar{Z}$, in this case $\epsilon_A = \epsilon_{A^c}$ and $C_A = 1$.
\end{proof}

\bibliography{refs}

\end{document}